\documentclass[journal]{IEEEtran}

\usepackage[utf8]{inputenc} 
\usepackage[T1]{fontenc}    
\usepackage{hyperref}       
\usepackage{url}            
\usepackage{booktabs}       
\usepackage{amsfonts}       
\usepackage{nicefrac}       
\usepackage{microtype}      
\usepackage{xcolor}         
\usepackage{enumitem}   
\usepackage{amsmath}
\usepackage{amsthm}
\usepackage{amssymb}
\usepackage{adjustbox}
\usepackage{balance}
\usepackage{enumerate}
\usepackage{algorithm}
\usepackage{algpseudocode}
\usepackage{listings}
\usepackage{mathtools}
\usepackage{cite}

\newcommand{\beq}{\begin{equation}}
\newcommand{\eeq}{\end{equation}}
\def\bv{\mbox{\boldmath $v$}}

\def\by{\mbox{\boldmath $y$}}
\def\bv{\mbox{\boldmath $v$}}
\def\bw{\mbox{\boldmath $w$}}

\def\bx{\mathbf{x}}
\def\bg{\mbox{\boldmath $g$}}

\def\bt{\mbox{\boldmath $t$}}

\def\bz{\mbox{\boldmath $z$}}
\def\by{\mbox{\boldmath $y$}}

\def\bX{\mbox{\boldmath $X$}}

\def\bX{\mathbf{X}}

\def\b1{\mbox{\boldmath $1$}}
\def\b0{\mbox{\boldmath $0$}}

\def\mA{\mbox{$\mathbf{A}$}}
\def\mh{\mbox{$\mathbf{h}$}}

\def\mY{\mbox{$\mathbf{Y}$}}

\def\mP{\mbox{$\mathbf{P}$}}

\def\mZ{\mbox{$\mathbf{Z}$}}
\def\mC{\mbox{$\mathbf{C}$}}

\def\L{{\cal L}}

\newtheorem{definition}{Definition}
\newtheorem{lemma}{\textit{Lemma}}

\newtheorem{remark}{Remark}

\floatstyle{ruled}
\newfloat{algorithm}{tbp}{loa}
\providecommand{\algorithmname}{Algorithm}
\floatname{algorithm}{\protect\algorithmname}

\makeatletter
\newcommand*\dotp{\mathpalette\dotp@{.5}}
\newcommand*\dotp@[2]{\mathbin{\vcenter{\hbox{\scalebox{#2}{$\m@th#1\bullet$}}}}}
\makeatother

\def\D{{\mathbf D}}

\def\H{{\mathbf H}}

\def\h{{\mathbf h}}
\def\L{{\mathbf L}}

\usepackage{graphicx}
\usepackage{caption}
\usepackage{subcaption}
\usepackage{dsfont}

\newtheorem{Theorem}{\textbf{\textit{Theorem}}}

\begin{document}
\title{Topological Adaptive Least Mean Squares Algorithms over Simplicial Complexes\vspace{.1cm}}

\author{Lorenzo Marinucci, ~\IEEEmembership{Student Member,~IEEE,} Claudio Battiloro, ~\IEEEmembership{Member,~IEEE,} \\and Paolo Di~Lorenzo,~\IEEEmembership{Senior Member,~IEEE}
        \vspace{-.6cm}
\thanks{Marinucci is with the Department of Statistics, Sapienza University of Rome, Piazzale Aldo Moro, 5, 00185 Rome, Italy.  Battiloro is a Postdoctoral Fellow at the Biostatistics Department of Harvard University, 655 Huntington Ave, Boston, MA 02115,
USA. Di Lorenzo is with the Department of Information Engineering, Electronics, and Telecommunications, Sapienza University of Rome, Via Eudossiana 18, 00184, Rome, Italy. {E-mail: \{l.marinucci, paolo.dilorenzo\}@uniroma1.it}, cbattiloro@hsph.harvard.edu. This work has been supported by the SNS JU project 6G-GOALS under the EU’s Horizon program Grant Agreement No 101139232, and by European Union under the Italian National Recovery and Resilience Plan of NextGenerationEU, partnership on Telecommunications of the Future (PE00000001- program RESTART). A preliminary version of this work was presented in \cite{marinucci2024topological}.}}

\maketitle
\begin{abstract}
This paper introduces a novel adaptive framework for processing dynamic flow signals over simplicial complexes, extending classical least-mean-squares (LMS) methods to high-order topological domains. Building on discrete Hodge theory, we present a topological LMS algorithm that efficiently processes streaming signals observed over time-varying edge subsets. We provide a detailed stochastic analysis of the algorithm, deriving its stability conditions, steady-state mean-square-error, and convergence speed, while exploring the impact of edge sampling on performance. We also propose strategies to design optimal edge sampling probabilities, minimizing rate while ensuring desired estimation accuracy. Assuming partial knowledge of the complex structure (e.g., the underlying graph), we introduce an adaptive topology inference method that integrates with the proposed LMS framework. Additionally, we propose a distributed version of the algorithm and analyze its stability and mean-square-error properties. Empirical results on synthetic and real-world traffic data demonstrate that our approach, in both centralized and distributed settings, outperforms graph-based LMS methods by leveraging higher-order topological features.
\end{abstract}

\begin{IEEEkeywords}
Adaptive learning, topological signal processing, simplicial complexes, diffusion strategies, topology inference.
\end{IEEEkeywords}

\section{Introduction}
Combinatorial topological spaces have been increasingly used to model and process complex systems such as social, biological, and communication networks \cite{battiloro_tesi,lambiotte2019networks}. Graphs are the simplest and most studied combinatorial topological spaces. In a graph,  pairwise interactions are encoded through graph edges \cite{Huang}. Consequently, a wide range of methods for analyzing signals over graphs has emerged, forming the framework of Graph Signal Processing (GSP) \cite{ortega2018graph}.  Important achievements in GSP include the introduction of graph convolutional filters to process signals and the definition of the graph Fourier transform, which offers a consistent generalization of the traditional discrete Fourier transform to irregular discrete domains \cite{isufi2024graph}. Many forms of relations in complex systems, however, are not just binary, but involve sets of more than two elements and thus can not be effectively represented through graph edges \cite{schaub2021signal}. Significant examples are given by biological neural networks, in which simultaneous activaction of group of neurons are relevant in the dynamics of the brain \cite{lambiotte2019networks} and social networks, where multiple relationships between individuals are the norm \cite{Ghahremani}. In these cases, graph representations are not exhaustive, since multi-way relations between vertices necessitate different topological descriptors to be encoded. To this end, signal processing tools have been extended to analyze signals over higher-order topological domains such as hypergraphs, simplicial complexes and cell complexes, often referred to as \textit{topological signals}, giving rise to the emergent field of Topological Signal Processing (TSP) \cite{barbarossa2020topological,ghorbanchian2021highord,Roddenberry}. Simplicial complexes, in particular, by definition have a rigid algebraic structure, which is completely described by a finite family of incidence matrices. A further incentive for simplicial-based representations is the solid algebraic theoretical foundation associated with these topological spaces, given by discrete Hodge theory, which provides topology-aware linear operators to process signals \cite{barbarossa2020topological}. Typically, the definition of these operators depends on the so-called Hodge Laplacian matrices, which extend the traditional graph Laplacians to higher dimensional spaces \cite{goldberg2002combinatorial}. 

Recent works in TSP have addressed classical problems of signal theory such as representation, inference and filtering of signals defined over simplicial complexes. In particular, in \cite{topological_slepians} the authors propose dictionary building strategies for sparse representation of topological signals. The work in \cite{barbarossa2020topological} includes techniques for recovering signals over simplicial complexes from partial information and novel topology inference methods to estimate the structure of an unknown simplicial complex from edge observations. An innovative notion of duality for topological signal spaces is discussed in \cite{barbarossa_2}, with the introduction of frequency domains and Fourier transforms to alternatively represent signals. Subsequently, simplicial convolutional filters have been developed \cite{SCF_9893391}, to regulate frequency responses and to extract relevant components in topological signals. Recently, a dictionary learning framework based on topological convolutional filters has been proposed, to enable sparse representation of signals over regular cell complexes \cite{grimaldi2025topologicaldictionarylearning}.
Furthermore, neural network architectures based on simplicial complexes have been proposed \cite{papillon2024,battiloro2024generalized}, to address learning tasks over topological spaces \cite{bodnar2021weisfeiler,SAT,Chen_power_grid} and latent topology inference \cite{battiloro2023latentgraphlatenttopology}. However, the study of signals on simplicial complexes in the works mentioned above pertains to \textit{static} signals, i.e. signals without temporal evolution. On the other hand, observations of flow signals in complex systems are often more appropriately modeled through temporal sequences of topological signals. This highlights the need to develop TSP techniques designed for \textit{dynamic} data, to address tasks such as learning or inference over topological spaces.

In this perspective, a fundamental area of investigation is adaptive learning, which is the renowned field in signal processing that encompasses the set of statistical learning techniques that leverage continuously updated information about a system to gain insights into its structure and/or evolution over time \cite{sayed2011adaptive,sayed2014adaptation}. The uniqueness of these methods lies in their \textit{online} approach, which offers advantages in terms of flexibility in learning when the dynamics of a system change over time \cite{sayed2022inference}. Adaptive methodologies for graph signal processing were proposed in a series of works. As an example, the authors in \cite{Graph_inference} propose an adaptive algorithm to infer the latent graph structure of a system from dynamic signals observations, whose evolution over time is regulated by heat diffusion laws. Traditional online estimation methods, such as least mean squares, recursive least squares (RLS), and Kalman flitering have been reinterpreted to address signal recovery tasks in the GSP framework \cite{di2016adaptive,isufi2020observing}. Instead, the work in \cite{di2018adaptive} introduces adaptive methods for the reconstruction of graph signals from observations collected over time-varying, randomly selected sets of nodes. Additionally, the works in \cite{di2017distributed,nassif2018distributed,hua2020online} employ distributed adaptive processing techniques on streaming data to identify graph filters.  

The literature on topological adaptive signal processing remains relatively limited, with only a few works addressing signals defined over simplicial complexes \cite{isufi2025topologicalsignalprocessinglearning}. Kalman filtering approaches have been explored in \cite{Money, Money_kalman}: the work in \cite{Money} focuses on edge flow imputation by leveraging flow conservation priors induced by the topology, while \cite{Money_kalman} introduces a topology-aware Kalman filter that models network dynamics through simplicial convolutional filters. An online method for edge flow prediction over time-varying simplicial complexes is proposed in \cite{yang2023online}. More recently, \cite{Yan2025} introduces an adaptive approach for jointly estimating vertex and edge signals over simplicial complexes. The closest related work is presented in \cite{Krishnan2024}, which introduced an adaptive learning framework for simplicial complexes. However, there are several key differences between this work and ours. First, while \cite{Krishnan2024} focuses on simplicial vector autoregressive models, our primary goal is to extend adaptive LMS algorithms to topological domains, along with their theoretical analysis. Second, our proposed methods incorporate a probabilistic sampling mechanism that can be tailored to improve the performance of the algorithms. Third, the methods in \cite{Money,Money_kalman,yang2023online,Yan2025,Krishnan2024} require prior knowledge of the complex’s topology, whereas in Section IV, we introduce an adaptive approach that infers the topology while simultaneously performing the learning task. Finally, the approaches in \cite{Money,Money_kalman,yang2023online,Yan2025,Krishnan2024} are inherently centralized, whereas we present a scalable, distributed processing strategy that enables adaptive learning from streaming topological signals.\\
\textbf{Contributions.} The goal of this paper is to introduce a novel adaptive framework for analyzing and processing dynamic signal flows defined over simplicial complexes. The main contributions of this work can be summarized as: \vspace{-.02cm}
\begin{itemize}
    \item Hinging on theoretical results from discrete Hodge Theory, we develop a new topological LMS algorithm that leverages dynamic edge signal observations to perform adaptive topological system identification. Specifically, we extend our preliminary work in \cite{marinucci2024topological}, generalizing the graph-based adaptive learning method in \cite{nassif2018distributed} by incorporating higher-order topological features, to process and learn from streaming signals defined over simplicial complexes. Additionally, through an in-depth stochastic analysis, we study the stability conditions of the proposed algorithm and derive closed-form formulas for its mean-square-error at steady state and convergence speed.
    \item We develop a \textit{probabilistic} sampling mechanism that explores the trade-off between sampling rate and performance of the algorithm. Specifically, the proposed strategy minimizes the sampling rate while guaranteeing steady-state performance in terms of mean-square-error and convergence speed. Furthermore, we assess the validity of the sampling methods through convergence analysis and numerical simulations.
    \item We propose an adaptive LMS technique to estimate the latent simplicial-complex structure of a system. Assuming the graph-level connections to be known (or previously estimated), higher-order topological features are learned jointly with the generating process of the observed data. 
    \item Finally, a distributed version of the topological algorithm is introduced, designed for implementation across a network of agents. We also outline the stability conditions, deriving explicit expressions of the mean-square behavior, influenced by the underlying topological structure, sampling techniques and data characteristics. Also in this case, empirical evaluations validate our theoretical findings.
\end{itemize}
\noindent \textbf{Outline.} The paper is organized as follows. Section II reviews key tools in topological signal processing. Section III introduces the topological LMS algorithm, analyzes its asymptotic behavior, and proposes principled sampling strategies. Section IV focuses on adaptive topology inference. Section V presents the distributed implementation and derives stability conditions similar to the centralized case. Section VI provides numerical validation on synthetic and real traffic data, comparing our method with the graph-based LMS from \cite{nassif2018distributed}. Section VII draws the conclusions of the paper.\smallskip

\noindent \textbf{Notation.} Scalar, column vector, and matrix variables are indicated by plain letters $a$, bold lowercase letters $\mathbf{a}$, and bold uppercase letters $\mathbf{A}$, respectively. $[\mathbf{a}]_i$ is the $i$-th element of vector $\mathbf{a}$, $[\mathbf{A}]_{i,j}$ is the $(i,j)$-th element of $\mathbf{A}$, $[\mathbf{A}]_i$ is the $i$-th row of $\mathbf{A}$, $[\mathbf{A}]^i$ is the $i$-th column of $\mathbf{A}$, and $\lambda_{MAX}(\mathbf{A})$ denotes the largest eigenvalue of $\mathbf{A}$.  $\mathbf{I}$ is the identity matrix, $\mathbf{1}$ is the vector of all ones, and $\mathbf{1}_{\mathcal{S}}$ is a vector with ones at the indices in $\mathcal{S}$ and zeros elsewhere. $\textrm{im}(\cdot)$, $\textrm{ker}(\cdot)$, and $\text{supp}(\cdot)$ denote the image, the kernel, and the support of a matrix, respectively; $\oplus$ is the direct sum of vector spaces, and $\otimes$ denotes the Kronecker product. $\{\mathbf{a}_k\}_{k=1}^K$ and $\{\mathbf{A}_k\}_{k=1}^K$ are the collection of $K$ vectors and matrices, respectively. Other specific notation is defined along the paper, if needed.

\section{Background}
In this section, we review the Topological Signal Processing (TSP) concepts required to introduce the adaptive filtering framework. Please refer to \cite{barbarossa2020topological, Hatcher} for further details.

\subsection{Simplicial Complexes and Signals} 

\noindent\textbf{Simplicial complexes.} A (finite abstract) \textit{simplicial complex} (SC) $\mathcal{X}$ is a pair $(\mathcal{S}, \mathcal{V})$, where $\mathcal{V}$ is a finite set of vertices and $\mathcal{S}$ is a family of subsets of $\mathcal{V}$ such that: (i) For every $x \in \mathcal{V}$, $\{x\} \in \mathcal{S}$; (ii) if $\sigma_i \in \mathcal{S}$ and $\sigma_j \subset \sigma_i$, then $\sigma_j \in \mathcal{S}$.
A set of $k+1$ vertices respecting the properties above is called a \textit{k-simplex}, or simplex of order $k$, and we denote it with $\sigma_i^k$. The \textit{dimension} of a SC $\mathcal{X}$ is the highest order of its simplices. A face of a $k$-simplex $\sigma_i^k$ is a subset of it with cardinality $k$, thus a $k$-simplex has $k+1$ faces. 

\noindent\textbf{Topological Signals.} Let us denote the set of $k$-simplex in $\mathcal{X}_{K}$ as  ${\cal D}_{k} := \{\sigma_{k,i}: \sigma_{k,i} \in \mathcal{X}_{K}\} $, with cardinality $|{\cal D}_{k}| = N_k$. A $k$-simplicial signal is defined as a collection of mappings from the set of all $k$-simplices contained in the complex to real numbers:
\begin{equation}\label{signals}
    \mathbf{x}_k = [x_k(\sigma^k_1),\dots,x_k(\sigma^k_i), \dots, x_k(\sigma^k_{N_k})]^T \in \mathbb{R}^{N_k},
\end{equation}
where $x_{k}: {\cal D}_{k} \rightarrow \mathbb{R}$. In most of the cases the focus is on complexes of dimension two, thus having a set of vertices $\mathcal{V}$ with $|\mathcal{V}| = V$, a set of edges $\mathcal{E}$ with $|\mathcal{E}|=E$, and a set of triangles $\mathcal{T}$ with $|\mathcal{T}| = T$, which result in ${\cal D}_{0}={\cal V}$ (simplices of order 0), ${\cal D}_{1}={\cal E}$ (simplices of order 1), and ${\cal D}_{2}={\cal T}$ (simplices of order 2). In this case, the $k$-simplicial signals are defined by the following mappings: 
\begin{equation}
    x_{0}: {\cal V} \rightarrow \mathbb{R} , \qquad x_{1}: {\cal E} \rightarrow \mathbb{R} , \qquad x_{2}: {\cal T} \rightarrow \mathbb{R} ,
\end{equation}
leading to graph, edge, and triangle signals, respectively.
\noindent\textbf{Adjacencies of Simplices.} We say that two $k$-simplices are \textit{upper adjacent} if they are both faces of a common $(k+1)$-simplex. Similarly, two $k$-simplices are \textit{lower adjacent} if they share a common face, i.e. a $(k-1)$-simplex. As such, $0$-simplices (vertices) can only be upper adjacent, recovering the usual node adjacency in a graph, while $K$-simplices can only be lower adjacent. As illustrated in Fig. \ref{Fig:adjacencies}, two edges are lower adjacent if they share a common endpoint node, while they are upper adjacent if they are both part of the same triangle.

\subsection{Algebraic Representation and Filters}
As with graphs, each simplex can be assigned one of two possible \textit{orientations}, depending on the ordering given to its vertices. More specifically, two different orderings of the vertices of a simplex induce the same orientation on it if one can be obtained by the other with an even number of transpositions, otherwise they induce opposite orientations. Additionally, given a $(k-1)$-simplex $\sigma_i^{k-1}$ and a $k$-simplex $\sigma_j^{k}$, such that $\sigma_i^{k-1} \subset \sigma_j^{k}$, they are \textit{coherently oriented} (denoted with $\sigma_i^{k-1}\sim\sigma_j^{k}$), if the orientation induced by $\sigma_j^{k}$ to its faces is in accordance with the orientation of $\sigma_i^{k-1}$, or \textit{uncoherently oriented} (denoted with $\sigma_i^{k-1}\nsim\sigma_j^{k}$) otherwise \cite{Hatcher}. 


\noindent\textbf{Incidence Matrices.} Given a reference orientation, a simplicial complex of dimension $K$ is entirely determined by the set of its \textit{incidence matrices} $\mathbf{B}_{k}$, $k=1,...,K$, which encode the inclusion relations between the $(k-1)$-simplices and the $k$-simplices in the complex: 
\begin{equation}\label{eq:incidence}
  \big[\mathbf{B}_{k} \big]_{i,j}=\left\{\begin{array}{rll}
  0, & \text{if} \; \sigma_i^{k-1} \not\subset \sigma_j^{k}; \\
  1,& \text{if} \; \sigma_i^{k-1} \subset \sigma_j^{k} \;  \text{ and $\sigma_i^{k-1}\sim\sigma_j^{k}$;}\\
  -1,& \text{if} \; \sigma_i^{k-1} \subset \sigma_j^{k} \;  \text{  and $\sigma_i^{k-1}\nsim\sigma_j^{k}$.}\\
  \end{array}\right.
  \end{equation} 

  \begin{figure}[t]
    \centering
    \includegraphics[width = \linewidth]{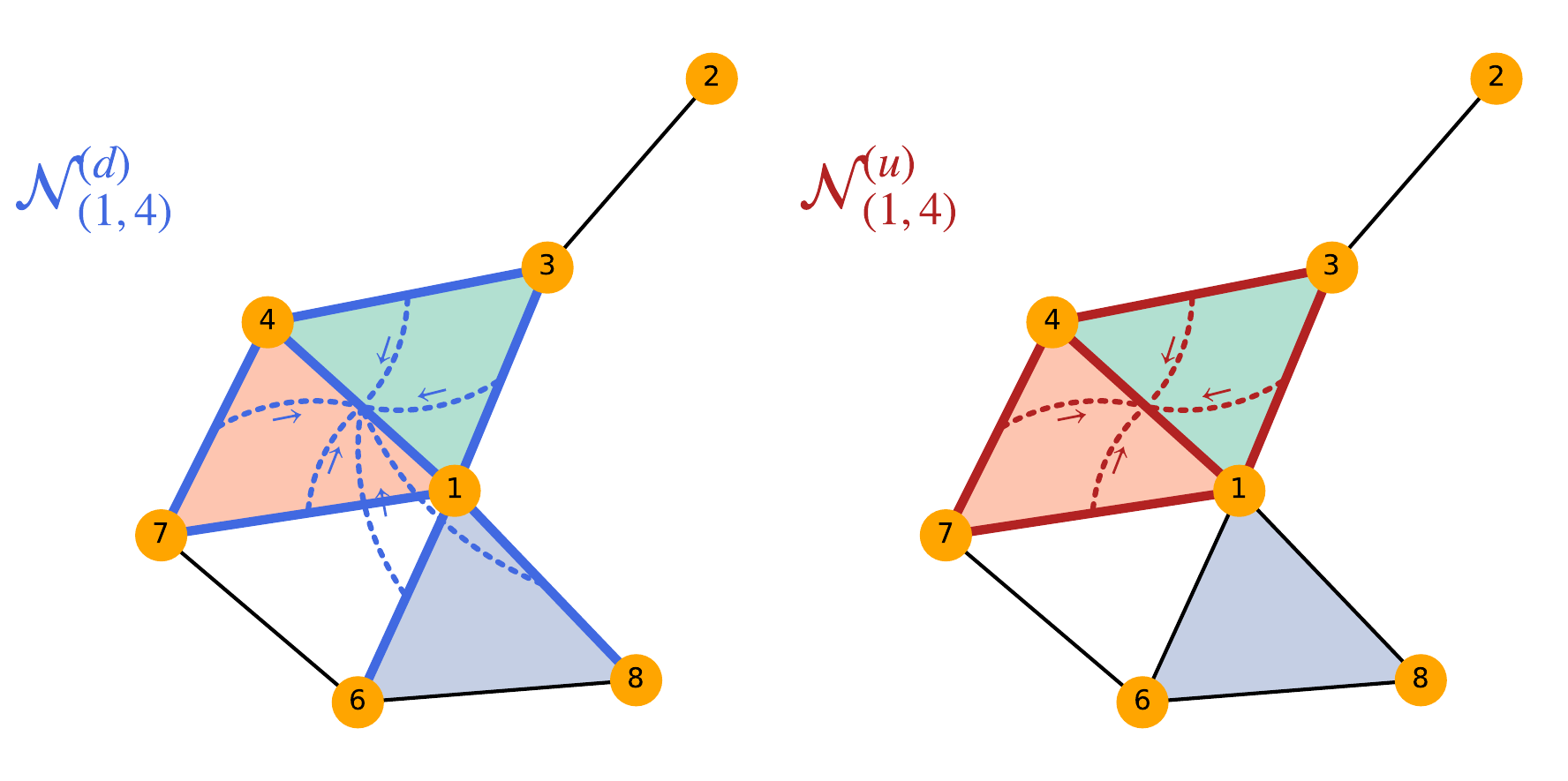}
    \caption{\textit{(Left)} The edge $(1,4)$ is lower-adjacent to the blue edges (denoted with $\mathcal{N}^{(d)}_{(1,4)})$.
    \textit{(Right)} The edge $(1,4)$ is upper-adjacent to the red edges (denoted with $\mathcal{N}^{(u)}_{(1,4)})$.\vspace{-.5cm}}
    \label{Fig:adjacencies}
\end{figure}

\noindent \textbf{Hodge Laplacians.} From the incidence information, it is possible to define the \textit{Hodge Laplacian} matrices as follows.
\begin{align}
 &\mathbf{L}_0 =\mathbf{B}_1\mathbf{B}^T_1 \notag\\
 &\mathbf{L}_k = \mathbf{B}^T_k\mathbf{B}_k+\mathbf{B}_{k+1}\mathbf{B}^T_{k+1}, \quad k=1,...,K-1 \\
 &\mathbf{L}_K = \mathbf{B}^T_K\mathbf{B}_K \notag
\end{align}
Note that each Laplacian of intermediate order, $k=1,...,K-1$ has two components, called the \textit{upper Laplacian} $\mathbf{L}_{k,u}=\mathbf{B}_{k+1}\mathbf{B}^T_{k+1}$ and the \textit{lower Laplacian} $\mathbf{L}_{k,d}=\mathbf{B}^T_{k}\mathbf{B}_{k}$. The upper Laplacian of order $k$ encodes the \textit{upper adjacency} between $k$-simplices, i.e. $[\mathbf{L}_{k,u}]_{i,j}\neq 0$ iff $\sigma^k_i$ and $\sigma^k_j$ are upper adjacent.  Similarly, the lower  Laplacian of order $k$ encodes the \textit{lower adjacency} between $k$-simplices, i.e. $[\mathbf{L}_{k,d}]_{i,j}\neq 0$ iff $\sigma^k_i$ and $\sigma^k_j$ are lower adjacent. $\mathbf{L}_0$ recovers the usual graph Laplacian.
\noindent\textbf{Hodge Decomposition.} Hodge Laplacians admit a Hodge decomposition, stating that the signal space associated with each simplex of order $k$ can be decomposed as the direct sum of the following three orthogonal subspaces \cite{lim2020hodge}:
\begin{equation} \label{hodge_spaces}
\mathbb{R}^{N_{k}} = \text{im}(\mathbf{B}_{k}^T\big) \oplus \text{im}\big(\mathbf{B}_{k+1}\big) \oplus \text{ker}\big(\mathbf{L}_{k}\big).
\end{equation}
Thus, every signal $\mathbf{x}_{k}$ of order $k$ can be decomposed as:
\begin{equation}
\label{hodge_decomp}
\mathbf{x}_{k}=\mathbf{B}_{k}^T\, \mathbf{x}_{k-1}+\mathbf{B}_{k+1}\, \mathbf{x}_{k+1}+\widetilde{\mathbf{x}}_{k},
\end{equation}
where $\widetilde{\mathbf{x}}_{k} \in \text{ker}\big(\mathbf{L}_{k}\big)$.

\noindent \textbf{Simplicial Fourier Transform} As discussed in  \cite{barbarossa2020topological},  Hodge Laplacians encode local and global topological properties of the complex, and its eigenvectors form a principled basis for topological signals. Let us denote the eigendecomposition  of the Hodge Laplacian of order $k$ with $\mathbf{L}_k=\mathbf{U}_k\Lambda_k\mathbf{U}^T_k$. The \textit{Simplicial Fourier transform} (SFT) $\hat{\mathbf{x}}_k$ of a $k$-topological signal $\mathbf{x}_k$  is defined as the projection of $\mathbf{x}_k$ over the basis of eigenvectors of $\mathbf{L}_k$ \cite{barbarossa2020topological}:
\begin{equation}\label{eq:sft}
\hat{\mathbf{x}}_k=\mathbf{U}^T_k\mathbf{x}_k 
\end{equation}

\noindent \textbf{Simplicial convolutional FIR filters.} Given the SFT definition in \eqref{eq:sft} and the Hodge decomposition in \eqref{hodge_spaces}, a \textit{simplicial convolutional FIR filter} $\H_k$ acting on $k-$topological signals is defined as a matrix polynomial of the upper $k$-Hodge Laplacian $\mathbf{L}_{k,u}$ and lower $k$-Hodge Laplacian $\mathbf{L}_{k,d}$ \cite{yang2023convolutional}, i.e.,
\begin{equation}\label{sep_mat}
    \H_k = \sum_{m=0}^{M}h_{m,u}\big(\L_{k,u}\big)^m+\sum_{m=1}^{M}h_{m,d}\big(\L_{k,d}\big)^m \notag,
\end{equation}
where $M$ is a positive integer and $\{h_{m,u}\}_{m=0}^M $, $\{h_{m,d}\}_{m=1}^M$ are the filter coefficients. In other words, a simplicial convolutional filter is simply obtained as a summation, moduled by scalar coefficients, of iterated \textit{shift} operations, 
which combine the signal values of \textit{upper} and \textit{lower} neighboring simplices. W.l.g., from now on we assume $\mathcal{X}$ to be a $2$-dimensional simplicial complex and we focus on edge signals $\mathbf{x}_1$.  For this reason, in the sequel, we will simply indicate $\mathbf{L}_{1,u}$ and $\mathbf{L}_{1,d}$ with $\mathbf{L}_u$ and $\mathbf{L}_d$, respectively, and $\mathbf{x}_1$ with $\mathbf{x}$.

\begin{remark}
    All TSP concepts introduced in this section naturally extend to regular cell complexes. In particular, the notions of lower and upper adjacency, as well as their representation with the lower and upper Laplacians $\L_d$ and $\L_u$, 
     are defined analogously to the simplicial case. As a result, the theoretical findings of this work are directly applicable to cell complexes as well. Indeed, initial formulations of LMS algorithms on cell complexes have already been discussed in \cite{marinucci2024topological}, further supporting the applicability of this framework beyond simplicial settings.
\end{remark}

\section{Topological Adaptive Least Mean Squares}
In this section, we introduce an LMS algorithm for adaptive filtering of streaming and partially observed topological signals over simplicial complexes. Let $\mathcal{X}$ be a 2-dimensional simplicial complex, and let $\mathbf{x}(n)$ be a stationary edge signal processed over time by the FIR filters in \eqref{sep_mat} according to the model:
\begin{align}\label{eq:lin_obs}
\boldsymbol{y}(n)=\mathbf{D}(n) & {\left[\sum_{m=0}^M h_{m, u}^o\left(\mathbf{L}_u\right)^m \mathbf{x}(n-m)\right.} \nonumber\\
&\hspace{-1.3cm} \left.+\sum_{m=1}^M h_{m, d}^o\left(\mathbf{L}_d\right)^m \mathbf{x}(n-m)+\boldsymbol{v}(n)\right], \quad n \geq M,
\end{align}
where $\mathbf{D}(n)={\rm diag}(d_1(n),\ldots,d_E(n))\in\mathbb{R}^{E\times E}$ is a random, time-varying sampling matrix, such that $d_i(n)=1$ if the signal's component on the edge $i$ is observed at time $n$, and zero otherwise; $\bv(n)= \left[ v_1(n), \ldots,v_E(n)\right]^T \in \mathbb{R}^{E}$ is an i.i.d. zero-mean measurement noise, independent of $\{\bx(n)\}_{n \in \mathbb{N}}$ and $\{\mathbf{D}(n)\}_{n \in \mathbb{N}}$, with diagonal covariance matrix $\mathbf{C}_v$; $\mathbf{h}^o=[h^o_{u,0},h^o_{1,u},\ldots,h^o_{M,u},h^o_{1,d},\ldots,h^o_{M,d}]^T\in\mathbb{R}^{2M+1}$ is the vector of filter coefficient that we aim to estimate and track. Intuitively, the observation model in (\ref{eq:lin_obs}) involves signal $\bx(n)$ and its previous observations, which are linearly processed by a simplicial convolutional filter that operates by combining neighboring edge signals in the upper and lower sense, through the actions of $\mathbf{L}_u$ and $\mathbf{L}_d$, respectively. Our model in \eqref{eq:lin_obs} is a non-trivial generalization for topological signals of the graph-based model in \cite{nassif2018distributed}, in which vertex signals are observed and processed by shift-invariant graph filters. For our framework and the related analysis, we assume the following.

\textit{Assumption 1 (Independent sampling):} The random variables $d_i(l)$ extracted from the sampling process are spatially and temporally independent, for all $i = 1, ..., E$, and $l \leq n$.

\textit{Assumption 2 (Stationarity):} The signal observations $\bx(n)$ arise from a zero-mean, wide-sense stationary process, i.e., \(\mathbb{E}\{\bx(n)\} = \mathbf{0}\) for all $n$, and its autocorrelation sequence $\mC_x(m)=\mathbb{E}\{\bx(n)\bx^T(n - m)\}$ does not depend on the time index $n$, being a function of the time lag \(m\) only.

\subsection{Algorithm Derivation}

For the sake of exposition, let us define the time-varying matrix $\bX(n) \in\mathbb{R}^{E \times 2M+1}$ as
\begin{align}\label{eq:X}
    \bX(n) = \big[\bx(n),& \L_u \bx(n - 1),\ldots, \L_u^{M}\bx(n - M), \nonumber\\
    &\quad \L_d \bx(n - 1),\ldots, \L_d^{M}\bx(n - M)\big],
\end{align}
which collects the aggregated regressors in (\ref{eq:lin_obs}) as columns. Note that the shift operators $\L_u$ and $\L_d$ only combine observations on neighboring edges, in the upper and lower sense respectively. Therefore, the $i-$th row of $\bX(n)$ can be computed locally for each edge $i \in E$, since it depends on edges at most $M$ (upper and lower) hops away in the complex. 

We employ the mean square error criterion \cite{sayed2011adaptive}, i.e., the estimates of the filter coefficients $\mathbf{h}^o$ are obtained by minimizing 
\begin{align}\label{eq:MSE}
   J(\mathbf{h}):=\mathbb{E}\left\{\big\|\by(n) - \D(n)\bX(n)\mathbf{h}\big\|^2 \right\}.
\end{align}
Given the convexity of $J$, the optimal $\mh$ is found by setting the gradient of $J$ equal to zero, i.e., solving the normal equations:
\begin{align}\label{eq:normaleq}
   \mC_X\mh = \mathbf{c}_{Xy}, 
\end{align}
where the matrix \(\mC_X\)$\in\mathbb{R}^{2M+1\times 2M+1}$ and the vector \(\mathbf{c}_{Xy}\)$\in\mathbb{R}^{2M+1 \times 1}$ are given by:
\begin{align}
 &\mC_X = \mathbb{E}\{\mathbf{X}(n)^T\D(n)\mathbf{X}(n)\},  \label{eq:covariance}\\ 
 &\mathbf{c}_{Xy} = \mathbb{E}\{\mathbf{X}(n)^T\D(n)\by(n)\}. \nonumber
\end{align}
Note that $\mC_X$ and $\mathbf{c}_{Xy}$ do not depend on $n$, since $\bx(n)$ and $\D(n)$ are stationary by Assumption 2. The system of equations in \eqref{eq:normaleq} admits a unique solution if the square matrix $\mC_X$ is positive definite, i.e., $\lambda_{\min}(\mC_X)>0$, where $\lambda_{\min}(\mY)$ is the minimum eigenvalue of matrix $\mY$. As (\ref{eq:covariance}) suggests, this condition depends crucially on the sampling process. Indeed, it can be shown by direct substitution that the matrix \(\mC_X\) has a $2\times 2$ block structure:
\begin{equation}\label{eq:Rx}
\mC_X=\begin{bmatrix}
\mC_{u,X} & \mC_{ud,X} \\
\mC_{ud,X} & \mC_{d,X}
\end{bmatrix},    
\end{equation}
and the matrices $\mC_{u,X},\mC_{d,X}$ and $\mC_{ud,X}$ can be evaluated explicitly. In particular, let $\mP$ denote the \textit{expected} sampling matrix $\mP = \mathbb{E}\{\D(n)\}$. Therefore, the $i$-th diagonal entry of $\mP$ is the probability $\mathbb{E}\{d_i(n)\}=p_i$ of sampling the $i$-th edge. The $(m,l)$ entry of matrix $\mC_{u,X} $ is then given by
\begin{align}\label{eq:RuX}
    [\mC_{u,X}]_{m,l} =&\; \mathbb{E}\left\{\bx^T(n - m)(\L_u^m)^T \D(n) \L_u^{l}\bx(n - l)\right\} \nonumber\\
     =& \;\text{Tr}\left((\L_u^{m})^T \mP \L_u^{l}\mC_x(m - l)\right).
\end{align}
Similarly, it holds:
\begin{align}\label{eq:RdX}
    &[\mC_{d,X}]_{m,l} =\text{Tr}\left((\L_d^{m})^T \mP \L_d^{l}\mC_x(m - l)\right),\nonumber\\
    &[\mC_{ud,X}]_{m,l} =\text{Tr}\left((\L_u^{m})^T \mP \L_d^{l}\mC_x(m - l)\right).
\end{align}
Finally, $\mathbf{c}_{Xy}=[\mathbf{c}^u_{Xy},\mathbf{c}^d_{Xy}]$ can be recast as
\begin{align}
    &\mathbf{c}^u_{Xy}=\text{Tr}\left((\L_u^{m})^T \mP \mC_{xy}(m)\right), \nonumber\\
    &\mathbf{c}^d_{Xy}=\text{Tr}\left((\L_d^{m})^T \mP \mC_{xy}(m)\right).
\end{align}
In the above equations, $\mC_{xy}(m)=\mathbb{E}\{\by(n)\bx^T(n-m)\}$ is the cross-correlation function, again independent of the time index $n$ because of Assumptions 1 and 2. Interestingly, from (\ref{eq:RuX})-(\ref{eq:RdX}), the positive definiteness of $\mC_X$ depends on a proper design of the sampling probability vector $\mathbf{p} ={\rm diag}(\mP)= (p_1, \ldots, p_E)^T$, 
which will be elaborated in Sec. III.\textit{B}. 

If $\mC_X$ is positive definite, the optimal parameter vector \(\mh^o\) can be estimated by either solving \eqref{eq:normaleq} or adopting iterative approaches. A straightforward option is updating the  estimates of the filter coefficient by gradient descent, i.e.,
\begin{align}\label{eq:grad_desc}
   \mh(n + 1) = \mh(n) + \mu \big( \mathbf{c}_{Xy} - \mathbf{C}_X \mh(n) \big),
\end{align}
with $\mu>0$ being a (sufficiently small) step-size. Nevertheless, having access to the second order moments of the observed signals is an unrealistic assumption in practical applications; thus, the matrix $\mathbf{C}_X$ and the vector $\mathbf{c}_{Xy}$ usually can not be evaluated beforehand. Using an LMS adaptive approach \cite{sayed2011adaptive}, we can exploit their instantaneous approximations:
\begin{align}
    \mathbf{C}_X \approx \mathbf{X}^T(n)\D(n) \mathbf{X}(n), \quad \mathbf{c}_{Xy} \approx \mathbf{X}^T(n) \D(n)\by(n).
\end{align}
This results in the following stochastic, adaptive filter:
\begin{align}\label{eq:grad_desc}
   \mh(n + 1) = \mh(n) + \mu \mathbf{X}^T(n)\D(n)\big( \by(n) - \mathbf{X}(n) \mh(n) \big). 
\end{align}
We refer to this approach as the topological LMS algorithm (Topo-LMS), which processes streaming topological signals $\{\by(n),\bx(n)\}$ by leveraging the structure of the underlying simplicial complex. The main steps of the proposed Topo-LMS method are listed in Algorithm 1.

\begin{algorithm}
\caption{: Topo-LMS on Simplicial Complexes}
\begin{algorithmic}[1]
\State \textbf{Data:} Set $\mh(0)$ randomly; choose $\mu>0$.
\For{each time $n \geq 0$}
    \State Choose a sampling operator $\D(n)$;
    \State Observe $\by(n)$, and build $\bX(n)$ as in (\ref{eq:X});
    \State Perform parameter adaptation as:
    \begin{align}
   \quad\mh(n + 1) = \mh(n) + \mu \mathbf{X}^T(n)\D(n)\big( \by(n) - \mathbf{X}(n) \mh(n) \big) \nonumber
\end{align}
\EndFor
\end{algorithmic}\label{alg:topolms}
\end{algorithm} 

\subsection{Stochastic Behavior}
We now investigate the mean and mean-square behavior of Algorithm 1. We define the error vector at time $n$ as: 
\begin{equation}
\widetilde{\mh}(n)=\mh^o-\mh(n) .    
\end{equation} 
According to expressions (\ref{eq:lin_obs}) and (\ref{eq:grad_desc}), its evolution over time satisfies the following recursion:
\begin{align}\label{eq:error_recursion}
    \widetilde{\mh}(n+1)=\mathbf{Q}(n)\widetilde{\mh}(n)-\mu \bg(n)
\end{align}
where
\begin{align}
    &\mathbf{Q}(n)=\mathbf{I}-\mu \mathbf{X}(n)^T\D(n)\mathbf{X}(n), \\
    & \bg(n)=\mathbf{X}(n)^T\D(n)\bv(n).
\end{align}
In the following, we introduce an assumption of temporal independence of the regression matrices $\{\mathbf{D}(n)\bX(n)\}_{n \in \mathbb{N}}$ to make the analysis more tractable \cite{sayed2011adaptive,sayed2014adaptation}. 

\textit{Assumption 3 (Independent regressors):} The regressors $\mathbf{D}(n) \mathbf{X}(n)$ arise from a zero mean random process that is temporally white with positive definite covariance $\mathbf{C}_X$, i.e., $\mathbf{D}(n)\mathbf{X}(n)$ is independent of $\mathbf{D}(l)\mathbf{X}(l)$, for every $l \leq n$. 

Even if Assumption 3 is not strictly verified in model (\ref{eq:lin_obs}) , it is widely used to study the converge of adaptive methods \cite{sayed2011adaptive,nassif2018distributed}.
A direct consequence of Assumption 3 is that $\mathbf{Q}(n)$ is independent of $\widetilde{\mh}(n)$ for every $n$. Thus, since $\mathbb{E}\{\bg(n)\} = \mathbf{0}$, taking the expected value on each side of (\ref{eq:error_recursion}), we get:
\begin{equation}\label{eq:expected_error}
    \mathbb{E}\{\widetilde{\mh}(n+1)\} = \mathbf{Q} \,\mathbb{E}\{\widetilde{\mh}(n)\},
\end{equation}
with $\mathbf{Q}=\mathbf{I}-\mu \mathbf{C}_X$. From (\ref{eq:expected_error}), the convergence in mean of the Topo-LMS algorithm  depends on the stability of matrix $\mathbf{Q}$. Specifically, the recursion (\ref{eq:grad_desc}) converges in mean to the optimal solution if the spectral radius $\rho(\mathbf{Q}) < 1$, or, equivalently, if the step-size $\mu$ is chosen to satisfy:
\begin{equation}\label{eq:step}
 0 < \mu < \frac{2}{\lambda_{\max}(\mathbf{C}_{X})}. 
 \end{equation}

In the sequel, we illustrate how condition (\ref{eq:step}) induces mean-square convergence of Algorithm 1 to a limit value, for which we provide a closed-form expression. To this end, we study the evolution of $\mathbb{E}\{\|\widetilde{\mh}(n)\|_{\boldsymbol{\Sigma}}^2\}$ for an arbitrary positive semi-definite matrix $\boldsymbol{\Sigma}$. Evaluating the weighted variance on both sides of (\ref{eq:error_recursion}) under Assumptions 1-3, we get:
\begin{equation}\label{eq:var_relation}
\mathbb{E}\{\|\widetilde{\mathbf{h}}(n+1)\|_{\boldsymbol{\Sigma}}^2\} = \mathbb{E}\{\|\widetilde{\mathbf{h}}(n)\|_{\boldsymbol{\Sigma}'}^2\}+ \mu^2\text{Tr}(\boldsymbol{\Sigma} \mathbf{G}),
\end{equation}
with $\boldsymbol{\Sigma}'  = \mathbb{E}\{\mathbf{Q}^T(n)\boldsymbol{\Sigma}\mathbf{Q}(n)\}$, and
\begin{align}\label{eq:covariance_g} 
\mathbf{G} &= \mathbb{E}\{\mathbf{X}(n)^T\mathbf{C}_{v}\mP\mathbf{X}(n)\}.
\end{align}
Note that matrix $\mathbf{G}$ in (\ref{eq:covariance_g}) can be formulated explicitly, following a procedure similar to \eqref{eq:RuX}-\eqref{eq:RdX}. Let us  define the vectors $\boldsymbol{\boldsymbol{\sigma}} = \text{vec}(\boldsymbol{\Sigma})$ and $\boldsymbol{\sigma}' = \text{vec}(\boldsymbol{\Sigma}')$, obtained by vectorizing $\boldsymbol{\Sigma}$ and $\boldsymbol{\Sigma}'$, respectively. Given 
\begin{equation} \label{eq:F_matrix}
\mathbf{F} = \mathbb{E}\{\mathbf{Q}^T(n) \otimes \mathbf{Q}^T(n)\},    
\end{equation}
we have that $\boldsymbol{\sigma}' = \mathbf{F}\boldsymbol{\sigma}$. The definition of $\mathbf{F}$ in (\ref{eq:F_matrix}) involves the fourth-order moments of the flow signals, which can not be derived under the current assumptions.
To overcome this issue, an often used approximation of $\mathbf{F}$ is given by $\mathbf{Q}^T \otimes \mathbf{Q}^T$, which is an accurate alternative when the step-size $\mu$ is sufficiently small, so that the terms involving higher-order powers of the $\mu$ can be ignored \cite{sayed2011adaptive,sayed2014adaptation}. In this case, the stability of $\mathbf{F}$ is ensured if $\rho(\mathbf{Q}) < 1$, i.e., if the step-size satisfies (\ref{eq:step}). 

Denoting the weighted norm $\|\cdot \|_{\boldsymbol{\Sigma}}$ as $\|\cdot \|_{\boldsymbol{\sigma}}$ with a slight abuse of notation, the variance evolution in (\ref{eq:var_relation}) can be rewritten as:
\begin{equation}\label{eq:var_relation2}
\mathbb{E}\{\|\widetilde{\mathbf{h}}(n+1)\|_{\boldsymbol{\sigma}}^2\} = \mathbb{E}\{\|\widetilde{\mathbf{h}}(n)\|_{\boldsymbol{\mathbf{F}\sigma}}^2\}+ \mu^2\text{vec}\left(\mathbf{G}\right)^T \boldsymbol{\sigma}.
\end{equation}
It is straightforward that this recursion can be recast as:
\begin{equation}\label{variance_iter}
\mathbb{E}\{\|\widetilde{\mathbf{h}}(n+1)\|^2_{\boldsymbol{\sigma}}\} = \mathbb{E}\{\|\widetilde{\mathbf{h}}(0)\|^2_{\mathbf{F}^n\boldsymbol{\sigma}}\}+ \mu^2\text{vec}\left(\mathbf{G}\right)^T \sum_{i=0}^{n}\mathbf{F}^i\boldsymbol{\sigma}.
\end{equation}
Therefore, the weighted variance $\mathbb{E}\{\|\widetilde{\mathbf{h}}(n)\|^2_{\boldsymbol{\sigma}}\}$ converges to the limit value:
\begin{equation}\label{eq:limit-point}
\lim_{n \to \infty} \mathbb{E}\left\{\|\widetilde{\mathbf{h}}(n)\|_{\boldsymbol{\sigma}}^2  \right\} = \mu^2\text{vec}\left(\mathbf{G}\right)^T  (\mathbf{I} - \mathbf{F})^{-1}\boldsymbol{\sigma}. 
\end{equation}
From (\ref{eq:limit-point}), we can obtain the mean-square deviation (MSD) at steady-state by simply choosing $\boldsymbol{\sigma}=\text{vec}(\mathbf{I})$.

We conclude noting that the transient component in (\ref{variance_iter}) vanishes as $\mathbf{F}^n$ for $n \xrightarrow{} \infty$. Thus, the convergence speed to the limit value can be measured through the norm $\alpha=\|\mathbf{F}\|_2$, which we refer to as \textit{convergence rate}. In particular, a smaller value of the convergence rate $\alpha$ corresponds to a faster mean-square convergence of the Topo-LMS algorithm to steady state.

\subsection{Optimal Sampling Strategies}
It is clear from \eqref{eq:limit-point} that the sampling probability distribution affects the Topo-LMS algorithm's learning performance. For this reason, in this section, we formally investigate the relationship between sampling probabilities, MSD at steady state, and convergence rate, ultimately  
establishing  optimal sampling strategies. To this aim, in the following theorem, we first elaborate on \eqref{eq:limit-point} to obtain expressions for MSD and convergence rate in which the dependence on the sampling probability vector $\mathbf{p}=\text{\rm diag}(\mP)$ becomes more explicit.
\begin{Theorem}\label{theo:samp_1}
Under Assumptions 1-3, if condition (\ref{eq:step}) holds, the MSD at steady state of the Topo-LMS algorithm can be written as:
\begin{equation}\label{eq:limit_var}
{\rm MSD}=\lim_{n \to \infty} \mathbb{E}\{\|\widetilde{\mathbf{h}}(n)\|^2\}=\frac{\mu}{2}\,{\rm Tr}\left(\mathbf{G}(\mathbf{p})\mathbf{C}_X^{-1}(\mathbf{p})\right) + \textit{O}({\mu}^2) , 
\end{equation}
where 
\begin{align}
        \mathbf{G}(\mathbf{p}) &=  \mathbb{E}\{ \mathbf{X}(n)^T {\rm diag (\mathbf{p})} \mathbf{C}_v \mathbf{X}(n)\}, \notag \\
      \mathbf{C}_X(\mathbf{p}) &= \mathbb{E}\{ \mathbf{X}(n)^T {\rm diag}(\mathbf{p}) \mathbf{X}(n)\}.\label{eq:Cx}
\end{align}
Furthermore, letting $\delta=\lambda_{\text{min}}\left(\mathbf{C}_X(\mathbf{p})\right)$ and $\nu=\lambda_{\text{max}}\left(\mathbf{C}_X(\mathbf{p})\right)$, if $ \mu \ll 2 \delta/\nu^2$, the convergence rate $\alpha$ of the Topo-LMS algorithm can be well approximated as:
\begin{equation}\label{eq:convergence_rate}
\alpha \simeq 1-2\mu \lambda_{\rm min}(\mathbf{C}_X(\mathbf{p})).    
\end{equation}
\end{Theorem}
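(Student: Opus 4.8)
The plan is to start from the closed-form limit in \eqref{eq:limit-point}, specialize it to the MSD by setting $\boldsymbol{\sigma}=\text{vec}(\mathbf{I})$, and then extract the leading order in $\mu$ using the small-step-size approximation $\mathbf{F}\approx\mathbf{Q}^T\otimes\mathbf{Q}^T$ adopted just before the theorem. Since $\mathbf{C}_X$ is symmetric, $\mathbf{Q}=\mathbf{I}-\mu\mathbf{C}_X$ is symmetric too, and expanding the Kronecker product gives
\[
\mathbf{I}-\mathbf{F}\approx \mu\big(\mathbf{I}\otimes\mathbf{C}_X+\mathbf{C}_X\otimes\mathbf{I}\big)-\mu^2\,\mathbf{C}_X\otimes\mathbf{C}_X.
\]
Factoring out $\mu$ and applying a Neumann expansion, I would write $(\mathbf{I}-\mathbf{F})^{-1}=\tfrac{1}{\mu}\big(\mathbf{I}\otimes\mathbf{C}_X+\mathbf{C}_X\otimes\mathbf{I}\big)^{-1}+O(1)$, so that the prefactor $\mu^2$ in \eqref{eq:limit-point} collapses to a single power of $\mu$ at leading order, while the discarded $O(1)$ term produces exactly the $O(\mu^2)$ remainder appearing in \eqref{eq:limit_var}.

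The key algebraic step is to evaluate $\big(\mathbf{I}\otimes\mathbf{C}_X+\mathbf{C}_X\otimes\mathbf{I}\big)^{-1}\text{vec}(\mathbf{I})$ through the vectorized Lyapunov identity. Using $(\mathbf{B}^T\otimes\mathbf{A})\,\text{vec}(\mathbf{Y})=\text{vec}(\mathbf{A}\mathbf{Y}\mathbf{B})$, this reduces to solving $\mathbf{C}_X\mathbf{Y}+\mathbf{Y}\mathbf{C}_X=\mathbf{I}$, whose solution is $\mathbf{Y}=\tfrac{1}{2}\mathbf{C}_X^{-1}$ (verified by direct substitution). Plugging $\text{vec}(\tfrac{1}{2}\mathbf{C}_X^{-1})$ back in and invoking $\text{vec}(\mathbf{G})^T\text{vec}(\mathbf{Y})=\text{Tr}(\mathbf{G}^T\mathbf{Y})=\text{Tr}(\mathbf{G}\mathbf{Y})$ — where $\mathbf{G}$ is symmetric because it is a quadratic form in $\mathbf{X}$ with the diagonal weight $\text{diag}(\mathbf{p})\mathbf{C}_v$ — yields $\text{MSD}=\tfrac{\mu}{2}\text{Tr}\big(\mathbf{G}\,\mathbf{C}_X^{-1}\big)+O(\mu^2)$, which is \eqref{eq:limit_var}.

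For the convergence rate, the plan is to reuse $\mathbf{F}\approx\mathbf{Q}\otimes\mathbf{Q}$ together with the fact that the singular values of a Kronecker product are the pairwise products of those of its factors, so that $\alpha=\|\mathbf{F}\|_2=\|\mathbf{Q}\|_2^2$. Because $\mathbf{Q}=\mathbf{I}-\mu\mathbf{C}_X$ is symmetric with eigenvalues $1-\mu\lambda_i(\mathbf{C}_X)$, we have $\|\mathbf{Q}\|_2=\max_i|1-\mu\lambda_i(\mathbf{C}_X)|$. For a step-size small enough that the slowest mode — associated with $\lambda_{\min}(\mathbf{C}_X)=\delta$ — dominates this maximum, $\|\mathbf{Q}\|_2=1-\mu\delta$, and a first-order expansion gives $\alpha=(1-\mu\delta)^2=1-2\mu\lambda_{\min}(\mathbf{C}_X)+O(\mu^2)$, i.e.\ \eqref{eq:convergence_rate}.

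The main obstacle is controlling these approximations quantitatively rather than formally, and this is precisely where the hypothesis $\mu\ll 2\delta/\nu^2$ enters. Note that $\mu\big(\mathbf{I}\otimes\mathbf{C}_X+\mathbf{C}_X\otimes\mathbf{I}\big)$ has smallest eigenvalue $2\mu\delta$, while the quadratic Kronecker correction $\mu^2\,\mathbf{C}_X\otimes\mathbf{C}_X$ has largest eigenvalue $\mu^2\nu^2$; requiring the latter to be dominated by the former gives exactly $\mu^2\nu^2\ll 2\mu\delta$, i.e.\ $\mu\ll 2\delta/\nu^2$. This single condition simultaneously legitimizes the Neumann truncation (hence the $O(\mu^2)$ remainder in the MSD), keeps the quadratic-in-$\mu$ term in $\mathbf{F}$ negligible so that $\mathbf{F}\approx\mathbf{Q}\otimes\mathbf{Q}$ and its linearization are accurate, and forces $\mu\delta\ll 1$ so that the minimum-eigenvalue mode sets $\|\mathbf{Q}\|_2$. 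Once this small-step-size regime is fixed, the remaining steps — the Kronecker expansion, the Lyapunov solve, and the trace identities — are routine.
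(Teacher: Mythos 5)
Your argument is correct, and both halves land on the paper's formulas; the second half (the convergence rate) is essentially identical to the paper's own derivation, which also reduces $\alpha=\|\mathbf{F}\|_2$ to $\|\mathbf{I}-\mu\mathbf{C}_X\|_2^2=\max\{(1-\mu\delta)^2,(1-\mu\nu)^2\}$ and discards the $\mu^2\nu^2$ term under $\mu\ll 2\delta/\nu^2$. Where you genuinely diverge is in the MSD computation. The paper first collapses the Kronecker series exactly, using $(\mathbf{Q}^m\otimes\mathbf{Q}^m)\,\mathrm{vec}(\mathbf{I})=\mathrm{vec}(\mathbf{Q}^{2m})$ to obtain the intermediate identity $\mathrm{MSD}=\mu^2\,\mathrm{Tr}\big(\mathbf{G}(\mathbf{I}-\mathbf{Q}^2)^{-1}\big)$, and only then expands $(\mathbf{I}-\mathbf{Q}^2)^{-1}=(2\mu\mathbf{C}_X-\mu^2\mathbf{C}_X^2)^{-1}=\tfrac{1}{2\mu}\mathbf{C}_X^{-1}(\mathbf{I}-\tfrac{\mu}{2}\mathbf{C}_X)^{-1}$ via a small dedicated lemma stating $(\mathbf{I}-\mu\mathbf{Z})^{-1}=\mathbf{I}+\mu\mathbf{R}(\mu)$ with $\|\mathbf{R}(\mu)\|_2$ bounded uniformly in $\mu$, which is what certifies the $O(\mu^2)$ remainder. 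You instead expand $\mathbf{I}-\mathbf{F}\approx\mu(\mathbf{I}\otimes\mathbf{C}_X+\mathbf{C}_X\otimes\mathbf{I})-\mu^2\,\mathbf{C}_X\otimes\mathbf{C}_X$ first and solve the Lyapunov equation $\mathbf{C}_X\mathbf{Y}+\mathbf{Y}\mathbf{C}_X=\mathbf{I}$, whose solution $\mathbf{Y}=\tfrac12\mathbf{C}_X^{-1}$ delivers the same leading term. Your route makes the link to the classical Lyapunov-equation form of LMS steady-state analysis explicit and gives a more transparent accounting of why $\mu\ll2\delta/\nu^2$ controls the neglected terms; the paper's route buys an exact closed form at the intermediate stage, which can be evaluated without any expansion if one wishes. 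One small point to tighten: your claim that the minimum-eigenvalue mode sets $\|\mathbf{Q}\|_2$ requires $\mu(\delta+\nu)\le 2$, which does not follow literally from $\mu\ll2\delta/\nu^2$ alone when $\delta$ is close to $\nu$; the paper avoids adjudicating the max by bounding both branches with $1-2\mu\delta+\mu^2\nu^2$, which is marginally cleaner.
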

\begin{proof}
See Appendix A in Supplementary Material.
\end{proof}



In the sequel, based on the findings of Theorem 1, we introduce an optimization strategy
aimed at minimizing the probability to sample simplices under learning performance guarantees, i.e., under constraints on the MSD in (\ref{eq:limit_var}) and convergence rate in (\ref{eq:convergence_rate}).  Formally, the optimal sampling probability vector $\mathbf{p}$ is determined as the solution of: 
\begin{align}\label{prob}
    &\underset{\mathbf{p}}{\text{min}} \quad 1^T\mathbf{p}  \notag\\ 
    &\quad \text{subject to:} \\
    &\qquad \textrm{(a)} \;\; 0 \leq \mathbf{p} \leq \mathbf{p}_{\text{max}} \nonumber\\ 
&\qquad \textrm{(b)} \;\;\lambda_\text{min}\left(\mathbf{C}_X(\mathbf{p})\right) \geq \frac{1-\alpha}{2\mu} \nonumber \\
&\qquad \textrm{(c)} \;\;\text{Tr}\left( \mathbf{G}(\mathbf{p})(\mathbf{C}_X(\mathbf{p}))^{-1} \right) \leq \frac{2\gamma}{\mu} \nonumber
\end{align}
where constraint (a), in which the $\leq$ is element-wise, imposes non-negativity and upper bounds on $\mathbf{p}$, which are well motivated by external factors that often occur in practical applications (e.g. limited energy or communication resources);  constraint (b) ensures that the convergence rate is at least $\alpha$ according to expression \eqref{eq:convergence_rate}; finally, constraint (c) ensures that the MSD is smaller than $\gamma$ according to \eqref{eq:limit_var}.

Unfortunately, problem \eqref{prob} is non-convex because of constraint (c). To overcome this issue, we propose a convex relaxation of Problem \eqref{prob}, where constraint (c) is replaced with a convex one. In particular, we employ the upper bound
\begin{equation}
  \text{Tr}\left( \mathbf{G}(\mathbf{p})(\mathbf{C}_X(\mathbf{p}))^{-1} \right) \leq \frac{\text{Tr}\left(\mathbf{G}(\mathbf{p}) \right)} {\lambda_\text{min}\left(\mathbf{C}_X(\mathbf{p})\right)} ,
\end{equation}
which holds true because, under constraint (b) in \eqref{prob}, $\mathbf{G}(\mathbf{p})$ and $\mathbf{C}_X(\mathbf{p})$ are positive definite matrices \cite{Sheng-DeWang}.
Therefore, the convex relaxation of Problem \eqref{prob} reads as:
\begin{align}\label{prob_2}
    &\underset{\mathbf{p}}{\text{min}} \quad 1^T\mathbf{p}  \notag\\ 
    &\quad \text{subject to:} \\
    &\qquad \textrm{(a)} \;\; 0 \leq \mathbf{p} \leq \mathbf{p}_{\text{max}} \nonumber\\ 
&\qquad \textrm{(b)} \;\;\lambda_\text{min}\left(\mathbf{C}_X(\mathbf{p})\right) \geq \frac{1-\alpha}{2\mu} \nonumber \\
&\qquad \textrm{(c)} \;\;\frac{\text{Tr}\left(\mathbf{G}(\mathbf{p}) \right)} {\lambda_\text{min}\left(\mathbf{C}_X(\mathbf{p})\right)}  \leq \frac{2\gamma}{\mu}\nonumber
\end{align}
The new constraint (c) in Problem \eqref{prob_2} is given by the ratio between a convex and a concave function, both positive and differentiable in the prescribed region. This defines a pseudo-convex map \cite{Avriel}, whose sub-level sets are convex. 
The convexity of the other constraints and the linearity of the objective function make Problem \eqref{prob_2} convex, and thus solvable using efficient numerical methods \cite{CVX}.

\section{Adaptive Topology Inference}
The framework outlined in the previous section is tailored for simplicial complexes with a \textit{known} and \textit{static} topology that remains fixed over time. This knowledge is encoded into the structure of the upper and lower Laplacian matrices $\mathbf{L}_u$ and $\mathbf{L}_d$ in (\ref{eq:lin_obs}). However, in several applications (e.g. wireless networks, social networks, functional brain networks) the topology is often latent or may change dynamically over time. In this section, we introduce a new procedure for \textit{joint} adaptive topology learning and filtering of streaming and partially observed topological signals over simplicial complexes, building on Algorithm (\ref{alg:topolms}). In the following, we assume that the topology of the complex is known up to order $k$ when processing $k$-signals. For example, if we are working with edge signals, we assume that nodes and edges are provided, while higher-order structures such as triangles (i.e., the upper adjacency of edges) must be inferred from the data. The topology inference task is then equivalent to inferring the upper Laplacian of the simplicial order we work on. In the edge case, let $T_{\textrm{max}}$ be the number of cliques of three elements of the underlying graph, i.e., the candidate triangles, and $T\leq T_{\textrm{max}}$ be the number of actual triangles. The upper Laplacian $\L_u \in \mathbb{R}^{E \times E}$ encodes the presence of triangles among all the 3-cliques, because it can be written as: 
\begin{equation}\label{eq:Laplacian_decomp}
    \L_u = \mathbf{B}_2\mathbf{B}_2^T =\sum_{j=0}^{T_{\textrm{max}}} t^o_j \mathbf{b}_j \mathbf{b}_j^T,
\end{equation}
where each term in the sum corresponds to a candidate triangle, $t^o_j \in \{0,1\}$ is a binary variable indicating whether the $j$-th 3-clique forms a triangle ($t^o_j = 1$) or not ($t^o_j = 0$), and $\mathbf{b}_j \in \mathbb{R}^E$ encodes the incidence relations between the $j$-th 3-clique and the edges of the complex. In other words, if $t^o_j = 1$, then $\mathbf{b}_j$ is a column of the incidence matrix $\mathbf{B}_2 \in \mathbb{R}^{E \times T}$. 

Using (\ref{eq:Laplacian_decomp}) into the signal model \eqref{eq:lin_obs}, we obtain:
\begin{align}
    \by(n) = \D(n)&\Bigg[\sum_{m=0}^{M}h^o_{m,u}\left(\sum_{j=0}^T t^o_j \mathbf{b}_j \mathbf{b}_j^T\right)^m \bx(n - m) \nonumber\\
       &\hspace{-.5cm}+ \sum_{m=1}^{M}h^o_{m,d}\big(\L_d\big)^m \bx(n - m) +\bv(n)\Bigg],    
\end{align}
where $\h^o$ (i.e., the filter coefficients) and $\mathbf{t}^o = [t_1,\dots,t_{T_{\textrm{max}}}] \in \mathbb{R}^{T_{\textrm{max}}}$ (i.e., the triangle indicators) are the quantities we aim to estimate in a joint and adaptive fashion. Analogously to \eqref{eq:X}, we define the time-varying matrix $\bX(\mathbf{t}, n) \in \mathbb{R}^{E \times (2M+1)}$, which now depends on the topology-related vector variable $\mathbf{t}$. As before, we use the mean-square criterion, now jointly estimating the filter coefficients and triangle indicators by solving
\begin{align}\label{top_inference}
    &\min_{\h,\,\mathbf{t}} \quad  
    J(\h,\mathbf{t})\,=\,\mathbb{E}\left\{\left\|\by(n) - \D(n)\bX(\mathbf{t},n)\h\right\|^2 \right\}\\
    &\qquad \text{subject to} \quad  \mathbf{t} \in \{0,1\}^{T_{\textrm{max}}} \ . \nonumber
\end{align}
Unfortunately, problem \eqref{top_inference} is non-convex and NP-hard. To address this challenge, we propose a continuous relaxation of the constraint combined with a two-step stochastic gradient descent algorithm. Specifically, we begin by relaxing the binary constraint, allowing the vector $\mathbf{t}$ to take continuous values in the interval $[0,1]$. To encourage sparse and near-binary solutions, we then introduce attractor functions that promote values of $\mathbf{t}$ close to 0 or 1. Specifically, we rewrite the problem as: 
\begin{align}\label{top_inference_relaxed}
    &\min_{\h,\,\mathbf{t}} \quad  
    J(\h,\bt) + \lambda_0\|\mathbf{t}\|_0 + \lambda_1 \|\mathbf{t}-\mathbf{1}\|_0\\
    &\qquad \text{subject to} \quad  \mathbf{t} \in [0,1]^{T_{\textrm{max}}}, \nonumber
\end{align}
with $J(\h,\mathbf{t})$ defined in (\ref{top_inference}), and $\lambda_0,\lambda_1 \geq 0$ denoting positive regularization parameters that can control the weight of the zero-attracting and one-attracting terms in (\ref{top_inference_relaxed}), respectively. Then, we exploit employ a \textit{two-step} stochastic projected gradient descent procedure, alternatively updating the filter coefficients and the triangle indicators estimates as:
\begin{align}
    \begin{aligned}
        &\h(n + 1) = \h(n) - \mu_1\nabla_{\h} J(\h(n),\mathbf{t}(n))\label{eq:proxgd}\\
        &\mathbf{t}(n+1) = \mathcal{H}_{\lambda_0}^{\lambda_1} \Big[ \mathbf{t}(n) - \mu_2 \nabla_{\mathbf{t}} J(\h(n+1),\mathbf{t}(n)) \Big]
    \end{aligned}
\end{align}
with $\mu_1, \mu_2>0$ being (sufficiently small) step-sizes, and $\mathcal{H}_{\lambda_0}^{\lambda_1}$ denoting the proximal hard-thresholding operator obtained as:
\begin{align}
\mathcal{H}_{\lambda_0}^{\lambda_1}(\mathbf{v}) =
\underset{\mathbf{u} \in [0,1]^{T_{\textrm{max}}}}{\arg\min}  &\frac{1}{2} \|\mathbf{u} - \mathbf{v}\|^2 
+ \lambda_0 \|\mathbf{u}\|_0 \notag \\
&+ \lambda_1 \|\mathbf{u} - \mathbf{1}\|_0. \label{eq:prox}
\end{align}
The closed form expression for (\ref{eq:prox}) is given in the sequel.
\begin{lemma}
The proximal operator $\mathcal{H}_{\lambda_0}^{\lambda_1}$ in (\ref{eq:prox}) reads as:
\begin{equation}\label{eq:Double_Hard_Thresh}
\mathcal{H}_{\lambda_0}^{\lambda_1}(v_i) =
\begin{cases} 
0, & v_i \leq \sqrt{2\lambda_0} \\
v_i, & \sqrt{2\lambda_0} < v_i < 1 - \sqrt{2\lambda_1} \\
1, & v_i \geq 1 - \sqrt{2\lambda_1} \ ,
\end{cases}
\end{equation}
for all $i=1,\ldots,E$.
\end{lemma}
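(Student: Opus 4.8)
The plan is to exploit the fact that the minimization defining $\mathcal{H}_{\lambda_0}^{\lambda_1}$ in \eqref{eq:prox} is fully \emph{separable} across coordinates: the squared norm $\frac12\|\mathbf{u}-\mathbf{v}\|^2$ and both penalties $\|\mathbf{u}\|_0$ and $\|\mathbf{u}-\mathbf{1}\|_0$ decompose as sums over the entries of $\mathbf{u}$. Consequently the vector problem splits into $T_{\textrm{max}}$ identical scalar problems, and it suffices to solve, for a fixed entry with $v := v_i$,
\[
\min_{u \in [0,1]} \; \tfrac12 (u-v)^2 + \lambda_0\,\|u\|_0 + \lambda_1\,\|u-1\|_0,
\]
where the scalar $\|u\|_0$ equals $0$ if $u=0$ and $1$ otherwise, and $\|u-1\|_0$ equals $0$ if $u=1$ and $1$ otherwise.

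Next I would handle the non-smoothness by a finite case split. Since both penalties are piecewise constant with jumps only at $u=0$ and $u=1$, I partition the feasible set $[0,1]$ into $\{0\}$, $\{1\}$, and the open interval $(0,1)$; on each piece the penalties are constant and the objective reduces to the smooth quadratic $\frac12(u-v)^2$ plus a constant. Minimizing over each piece produces three candidates: $u=0$ with cost $\frac12 v^2 + \lambda_1$; $u=1$ with cost $\frac12(1-v)^2 + \lambda_0$; and, for $v\in(0,1)$, the interior minimizer $u=v$ with cost $\lambda_0+\lambda_1$ (when $v\notin(0,1)$ the interior infimum is dominated by an endpoint and is discarded). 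The global minimizer is whichever candidate attains the least cost, and it exists because all three candidate points are feasible.

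The thresholds then follow from pairwise comparison of these costs. The endpoint $u=0$ is preferable to the interior solution exactly when $\frac12 v^2 \le \lambda_0$, i.e. $v \le \sqrt{2\lambda_0}$; symmetrically $u=1$ beats the interior exactly when $\frac12(1-v)^2 \le \lambda_1$, i.e. $v \ge 1 - \sqrt{2\lambda_1}$. Under the natural ordering $\sqrt{2\lambda_0} \le 1 - \sqrt{2\lambda_1}$ --- which is precisely what makes the middle case of \eqref{eq:Double_Hard_Thresh} nonempty and the stated case split consistent --- these two conditions are mutually exclusive, so for $v\le\sqrt{2\lambda_0}$ the minimizer is $0$, for $v\ge 1-\sqrt{2\lambda_1}$ it is $1$, and in between neither endpoint improves upon $u=v$, giving minimizer $v$. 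The non-strict inequalities in \eqref{eq:Double_Hard_Thresh} simply encode the tie-breaking at the boundaries, where an endpoint cost and the interior cost coincide. Collecting the three regimes reproduces \eqref{eq:Double_Hard_Thresh}, and applying the identical argument to every coordinate completes the proof.

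I expect the main obstacle to be conceptual rather than computational: because the $\ell_0$ terms are neither convex nor continuous, standard proximal and subdifferential calculus does not apply, so the crux is to reduce the global nonconvex minimization to a finite comparison over the three smooth pieces on which the penalties are constant. A secondary point requiring care is to state and invoke the threshold-ordering condition $\sqrt{2\lambda_0}+\sqrt{2\lambda_1}\le 1$, without which the piecewise map would be ill defined.
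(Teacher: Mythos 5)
Your proposal is correct and follows essentially the same route as the paper's proof: exploit coordinate-wise separability, reduce to the three candidates $\{0, v_i, 1\}$, and compare the costs $\tfrac12 v_i^2+\lambda_1$, $\lambda_0+\lambda_1$, and $\tfrac12(1-v_i)^2+\lambda_0$ pairwise to obtain the thresholds. If anything, you are slightly more careful than the paper in justifying why only these three candidates need be considered and in making explicit the ordering condition $\sqrt{2\lambda_0}<1-\sqrt{2\lambda_1}$, which the paper only states after the lemma.
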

\begin{proof}
See Appendix B in Supplementary Material.
\end{proof}
Intuitively, the proximal hard-thresholding operator in (\ref{eq:Double_Hard_Thresh}) encourages sparsity by setting to zero all entries smaller than $\sqrt{2\lambda_0}$, while simultaneously promoting activation (i.e., setting to one) for entries larger than $1 - \sqrt{2\lambda_1}$, thus favoring the presence of active triangles. From (\ref{eq:Double_Hard_Thresh}), it is also clear that the parameters $\lambda_0$ and $\lambda_1$ must be chosen to satisfy the inequality $1 - \sqrt{2\lambda_1}>\sqrt{2\lambda_0}$. Also in this case, we adopt a stochastic approach where the gradients $\nabla_{\h} J(\h(n),\mathbf{t}(n))$ and $\nabla_{\mathbf{t}} J(\h(n+1),\mathbf{t}(n))$ in (\ref{eq:proxgd}) are substituted by instantaneous approximations based on online data observations. The resulting algorithm (\ref{eq:proxgd}) filters streaming topological signals $\{\by(n),\bx(n)\}$ while learning the latent topology of the underlying simplicial complex. We will illustrate its performance and adaptation capabilities in Sec. VI.

\section{Distributed Topological Least-mean Squares}
Up to now, we have considered algorithms operating in a centralized fashion, 
i.e, they implicitly assume that all the data is gathered by a single processing unit running the adaptive filtering. This approach can lead to bottlenecks and single-point vulnerabilities as system scale increases \cite{sayed2014adaptation,di2018adaptive,nassif2018distributed,di2020distributed}. In distributed systems, individual agents instead collect and process their local measurements and can only share information with nearby agents. In our topological setting, the agents are the edges of the simplicial complex aiming to perform the adaptive filtering in a decentralized and cooperative way, and they exchange information according to two distinct connectivity schemes encoded by the lower and upper adjacencies induced by the simplicial complex.
Real-world examples of this setting can be found in several applications related to the monitoring and control of critical network infrastructure, such as traffic, hydraulic, or communication networks, where flow measurements (i.e., edge signals) play a crucial role. 

In this section, we  propose a distributed version of Topo-LMS (cf. Algorithm \ref{alg:topolms}) using an \textit{adaptive diffusion} approach \cite{cattivelli2009diffusion,nassif2018distributed}, promoting local consensus and spread of information over the simpicial complex. In this distributed setting, the evolution of the signal's component $y_i(n)$ over the $i$-th edge can be easily written according to \eqref{eq:lin_obs} as: 
    \begin{align}\label{eq:lin_obs2}
    y_i(n) = &\;d_i(n)\sum_{m=0}^{M}h^o_{m,u}\big[\big(\L_u\big)^m \bx(n - m)\big]_i \nonumber\\
       &\hspace{-1.2cm}+ d_i(n)\sum_{m=1}^{M}h^o_{m,d}\big[\big(\L_d\big)^m \bx(n - m)\big]_i +d_i(n)v_i(n)    
\end{align}
for all $i=1,\ldots,E$, where $v_i(n)$ denotes the $i$-th component of the noise vector $\bv(n)$. Let us introduce the vector
\begin{equation}\label{z_i}
    \bz_i(n):= [\bX(n)]_{i}^T,
\end{equation}
with $\bX(n)$ given by (\ref{eq:X}). Then, the observation (\ref{eq:lin_obs2}) can be equivalently cast as:
\begin{align}\label{eq:lin_obs3}
    y_i(n) = d_i(n)\left(\bz_i(n)^T \h^o  + v_i(n)\right), \quad i=1,\ldots,E.  
\end{align} 
Importantly, from (\ref{z_i}) and (\ref{eq:X}), the regressor $\bz_i(n)$ can be computed locally at edge $i$ by combining information from neighboring edges, as illustrated in Fig. \ref{Fig:adjacencies}. This is achieved diffusing signals over the simplicial complex through the successive application of the upper and lower Laplacians, $\L_u$ and $\L_d$, over their respective neighborhoods. Now, the MSE in \eqref{eq:MSE} can be rewritten as a sum over the edges, letting to the  distributed learning problem:
\begin{align}\label{distributed_problem}
    &\underset{\h_i}{\text{min}} \ \sum_{i=1}^E \mathbb{E}\{|y_i(n) - d_i(n)\bz_i(n)^T\h_i|^2 \} \\ 
    &\text{subject to:} \quad \h_i = \h_j, \; \hbox{for all $i,j\in\mathcal{N}_i$,} \nonumber
\end{align}
where $\{\h_i\}_{i=1}^E$ are local copies of the global variable $\h$, constrained to reach asymptotic consensus. In (\ref{distributed_problem}), $\mathcal{N}_i$ represents the set of neighbors of edge $i$ used to diffuse information and reach consensus over the copies. Note that this further communication graph is not necessarily coincident with the one induced by $\mathbf{L}_d$ or $\mathbf{L}_u$, and represents a further degree of freedom in the design of the distributed algorithm. Now, following the diffusion adaptation approach from \cite{cattivelli2009diffusion,sayed2014adaptation,nassif2018distributed}, we proceed by tailoring the \textit{Adapt-then-Combine $(ATC)$} distributed algorithm to our topological learning problem in (\ref{distributed_problem}), thus obtaining the recursions:
\begin{align}
&\hspace{-.1cm}\bw_i(n+1) = \h_i(n)+\mu_i d_i(n)\bz_i(n) (y_i(n)-\bz_i(n)^T \h_i(n))  \medskip \nonumber\\
&\h_i(n+1) = \displaystyle\sum\limits_{l\in \mathcal{N}_i}a_{i,l}\,\bw_l(n+1) \label{eq:combine}
\end{align}
for all $i = 1,\ldots, E$, where 
$\{\mu_i\}_{i=1}^E$ are positive step-sizes, and $\{a_{i,l}\}$ are local combination coefficients satisfying \cite{cattivelli2009diffusion}:
\begin{equation}
    a_{i,l}\geq 0, 
    \quad a_{i,l} = 0 \ \text{  if  }  \ l\notin \mathcal{N}_{i}, \quad \sum\limits_{l\in \mathcal{N}_{i}}a_{i,l}=1 \ .
    \label{eq:filtering conditions}
\end{equation} 
In \eqref{eq:combine}, if the sampling variable $d_i(n)=1$, each agent first adapts its estimate $\h_i(n)$ using its own data $\{\by_i(n),\bz_i(n)\}$, which can be built via data exchange with neighbor agents; then, each local intermediate estimate $\bw_i(n+1)$ is combined with the intermediate estimates of neighboring edges with weights $\{a_{k,l}\}$ to update the estimates $\h_i(n+1)$ and push them toward consensus. If the sampling variable $d_i(n)=0$, the agent performs only the combination step, diffusing local information over the network. The main steps of the proposed distributed method are summarized in Algorithm \ref{alg:topolms_distr}.

\subsection{Stochastic Behavior}
In this section, we study the stability conditions of Algorithm \ref{alg:topolms_distr}, and its convergence properties in terms of mean and mean-square behavior. Let $\mathbf{A}=\{a_{k,l}\}$, and let $\widetilde{\mathbf{h}}_{\text{dec}}(n)$ be the network error vector, which stacks the estimation error of each edge agent, namely:
\begin{align}\label{eq:network_error}
    \widetilde{\mathbf{h}}_{\text{dec}}(n)=\text{col}\left\{\h^o-\mathbf{h}_i(n)\right\}^E_{i=1}.
\end{align}
From (\ref{eq:combine}) and (\ref{eq:network_error}), the time evolution of $\widetilde{\mathbf{h}}_{\text{dec}}(n)$ reads as:
\begin{align}\label{eq:error_recursion2}
    \widetilde{\mathbf{h}}_{\text{dec}}(n+1)=\mathbf{B}(n)\widetilde{\mathbf{h}}_{\text{dec}}(n)-\mathcal{A}\mathbf{M} \bg(n)
\end{align}
where
\begin{align}
    &\mathbf{B}(n)=\mathcal{A}(\mathbf{I}_{E\cdot (2M+1)}-\mathbf{M} \mathbf{C}_z(n)) \notag \\
    &\mathbf{M}= \text{diag}\{\mu_i \mathbf{I}_{2M+1}\}_{i=1,...,E} \notag\\
       &\mathcal{A}= \mA \otimes \mathbf{I}_{2M+1} \notag 
\end{align}
\begin{align}   
   &\mathbf{C}_z(n)= \text{diag}\{d_i(n)\bz_i(n)\bz_i(n)^T\}_{i=1,...,E} \ \notag \\
    & \bg(n)= \text{col}\{d_i(n)\bz_i(n)v_i(n)\}_{i=1,...,E} \notag \ 
\end{align}
Similarly to the centralized case, we introduce this assumption.

\begin{algorithm}
\caption{: Distributed Topo-LMS}
\begin{algorithmic}[1]
\State \textbf{Data:} Set $\h(0)$ randomly, choose $\{a_{k,l}\}$ satisfying (\ref{eq:filtering conditions}), set  $\{\mu_i\}>0$.
\For{each time $n \geq 0$}
    \State  Choose sampling variables $d_1(n),\ldots,d_E(n)$;
    \For{each edge $i \in E$}
    \State Observe $y_i(n)$, and build $\bz_i(n)$ as in (\ref{z_i});
    \State Perform distributed adaptation as:
    \begin{align}\label{eq:combine}\hspace{-.2cm}
     &\bw_i(n+1) = \h_i(n)+\mu_id_i(n)\bz_i(n) (y_i(n)-\bz_i(n)^T \h_i(n)) \nonumber \medskip\\
     &\h_i(n+1) = \displaystyle\sum\limits_{l\in \mathcal{N}_i}a_{i,l}\,\bw_l(n+1)\nonumber
     \end{align}
    \EndFor
\EndFor
\end{algorithmic}\label{alg:topolms_distr}
\end{algorithm}

\textit{Assumption 4 (Local independence):} The regressors $\bz_i(n)$ arise from a zero-mean random process that is temporally white.

Assumption 4 directly implies that $\mathbf{B}(n)$ is independent of $\widetilde{\h}_{\text{dec}}(n)$ for every $n$, 
and is commonly used when analyzing adaptive estimation algorithms 
to simplify the derivations without constraining the conclusions \cite{sayed2011adaptive}. Then, as a consequence of Assumption 4 and 
since $\mathbb{E}\{\bg(n)\} = \mathbf{0}$, taking the expectation of both sides of (\ref{eq:error_recursion2}), we obtain:
\begin{equation}\label{eq:mean_recursion_dist}
    \mathbb{E}\{\widetilde{\mathbf{h}}_{\text{dec}}(n+1)\} = \mathbf{B} \,\mathbb{E}\{\widetilde{\mathbf{h}}_{\text{dec}}(n)\} ,
\end{equation}
with
\begin{align}
  &\mathbf{B}=\mathcal{A}(\mathbf{I}_{E\cdot (2M+1)}-\mathbf{M} \mathbf{C}_z), \label{eq:Q_matrix}\\
  &\mathbf{C}_z = \mathbb{E}\{\mathbf{C}_z(n)\} = \text{diag}\{\mathbf{C}_{z,i}\}_{i=1,...,E} \label{eq_Cz}
\end{align}
where $\mathbf{C}_{z,i}=\mathbb{E}\{d_i(n)\bz_i(n)\bz_i(n)^T\}$. Thus, from (\ref{eq:mean_recursion_dist}), the estimates $\{\mathbf{h}_i(n)\}_{i=1}^E$ generated by (\ref{eq:combine}) converge in mean to the ground truth vector $\h^o$ if $\mathbf{B}$ is stable, i.e., if $\rho(\mathbf{B}) < 1$. 

In Sec. III, to ensure the mean convergence of the Topo-LMS algorithm, we assumed the non-singularity of the covariance matrix $\mathbf{C}_X$ in (\ref{eq:Cx}). Naturally, one could extend the assumption in this context to all the local covariance matrices $\mathbf{C}_{z,i}$ in (\ref{eq_Cz}), for all $i=1,\ldots,E$; this would simply guarantee, as for the centralized method, convergence in mean of Algorithm (\ref{alg:topolms_distr}), if the step-sizes satisfy local stability conditions as in (\ref{eq:step}). This assumption is typical in distributed learning, and can be found in several previous works, see, e.g., \cite{cattivelli2009diffusion,nassif2018distributed}.
It is easy to observe that such an assumption, for signals defined on simplicial complexes, imposes very restrictive conditions on the structure of the complex itself, particularly requiring it to satisfy completeness hypotheses. Specifically, each edge must be the face of some triangle in the simplicial complex. 
In fact, if this were not the case for some edge \( j \in E \), the \( j \)-th row of the matrix \( \mathbf{L}^u \) would be zero. As a result, according to the definition of \( z_j \) given in (\ref{z_i}), its corresponding entries would be identically zero. Consequently, matrix $\mathbf{C}_{z,j}$ would be singular, violating the assumption. 
Moreover, from a learning standpoint, since adaptation on each edge  $k$ depends directly on the matrix  $\mathbf{C}_{z,i}$, assuming these matrices are non-singular implies that each local agent could asymptotically learn the true parameter $\h^o$ independently, without requiring communication or diffusion across the network. However, this scenario is generally unlikely due to the limited local knowledge of each edge’s complex structure and the presence of the sampling operator, which further constrains data observability.

In the following, we show that the method converges in the mean under milder conditions, requiring only one agent to have a non-singular matrix \( \mathbf{C}_{z,i} \), provided the communication topology and filtering coefficients \( \{a_{i,l}\} \) are appropriately designed. In particular, we require that the aggregate coefficient matrix \( \mA \) is \emph{irreducible}, meaning that any two agents in the network can communicate within a finite number of steps, as specified in the following definition \cite{norris1998markov}.
\begin{definition}
A stochastic matrix $\mA$ is irreducible if, for every pair of states $i$ and $j$, there exists a positive integer $k$ such that the 
$(i,j)$ element of the matrix 
$\mA^k$ is positive, i.e., $(\mA^k)_{i,j}>0$. This means that it is possible to reach state $j$ from state 
$i$ in a finite number of steps.
\end{definition}

The convergence in mean of algorithm (\ref{eq:combine}) is then ensured by the following theorem.

\begin{Theorem}\label{distr}
Assume that \( \mA \) is irreducible and satisfies the conditions in (\ref{eq:filtering conditions}). If there exists an edge index \( \overline{k} \) such that \( \mathbf{C}_{z,\overline{k}} \) in (\ref{eq_Cz}) is non-singular, and the step-sizes satisfy 
$$0 < \mu_i < \frac{2}{\rho(\mathbf{C}_{z,i})}, \quad \hbox{for all $i$,}$$ 
then \( \rho(\mathbf{B}) < 1 \). Thus, Algorithm~\ref{alg:topolms_distr} converges in the mean.

\end{Theorem}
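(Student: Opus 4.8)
The plan is to prove the spectral bound $\rho(\mathbf{B}) < 1$ by an eigenvector argument phrased in the block-maximum norm, and then read off mean convergence directly from the linear recursion (\ref{eq:mean_recursion_dist}). First I would record the elementary structural facts. Each block $\mathbf{C}_{z,i}=\mathbb{E}\{d_i(n)\mathbf{z}_i(n)\mathbf{z}_i(n)^T\}$ is symmetric positive semidefinite, so $\mathbf{D}_i := \mathbf{I}_{2M+1}-\mu_i\mathbf{C}_{z,i}$ is symmetric with eigenvalues $1-\mu_i\lambda$, $\lambda\in[0,\rho(\mathbf{C}_{z,i})]$. The step-size condition $0<\mu_i<2/\rho(\mathbf{C}_{z,i})$ places these in $(-1,1]$, so $\|\mathbf{D}_i\|_2=\rho(\mathbf{D}_i)\le 1$, with the crucial refinement that the only unit-modulus eigenvalue is $+1$ and its eigenspace equals $\ker(\mathbf{C}_{z,i})$. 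Writing $\mathbf{D}=\mathrm{diag}\{\mathbf{D}_i\}$ and using that $\mathbf{A}$ is row-stochastic, so that $\|\mathcal{A}\|_{b,\infty}=1$ in the block-maximum norm $\|x\|_{b,\infty}=\max_i\|x_i\|_2$, submultiplicativity yields $\rho(\mathbf{B})\le\|\mathbf{B}\|_{b,\infty}\le\|\mathcal{A}\|_{b,\infty}\|\mathbf{D}\|_{b,\infty}\le 1$. Hence it only remains to exclude $\rho(\mathbf{B})=1$.

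To that end I would argue by contradiction: suppose $\mathbf{B}v=\lambda v$ with $|\lambda|=1$ and normalize $\|v\|_{b,\infty}=\max_i\|v_i\|_2=1$, writing $v=\mathrm{col}\{v_i\}$. The eigenrelation reads $\sum_l a_{i,l}\mathbf{D}_l v_l=\lambda v_i$ for each $i$. Let $\mathcal{S}=\{i:\|v_i\|_2=1\}$, which is nonempty by normalization. For $i\in\mathcal{S}$, taking norms gives the chain $1=|\lambda|\,\|v_i\|_2\le\sum_l a_{i,l}\|\mathbf{D}_l v_l\|_2\le\sum_l a_{i,l}\|v_l\|_2\le\sum_l a_{i,l}=1$, so every inequality is an equality. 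Since $a_{i,l}\ge 0$ and $\|v_l\|_2\le 1$, equality in the last step forces $\|v_l\|_2=1$ (hence $l\in\mathcal{S}$) for every $l$ with $a_{i,l}>0$; equality in the middle step then forces $\|\mathbf{D}_l v_l\|_2=\|v_l\|_2=1$, which by the kernel characterization above gives $v_l\in\ker(\mathbf{C}_{z,l})$. Thus $\mathcal{S}$ is closed under the out-neighbors of $\mathbf{A}$, and each such $v_l$ lies in $\ker(\mathbf{C}_{z,l})$.

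The irreducibility of $\mathbf{A}$ is what upgrades this local closure into a global statement. Fixing any $i_0\in\mathcal{S}$, strong connectivity guarantees a directed path of positive-weight links from $i_0$ to every index $j$; propagating the closure property along this path shows $j\in\mathcal{S}$, so $\mathcal{S}=\{1,\dots,E\}$ and $v_l\in\ker(\mathbf{C}_{z,l})$ for all $l$. Specializing to $l=\overline{k}$, non-singularity of $\mathbf{C}_{z,\overline{k}}$ gives $\ker(\mathbf{C}_{z,\overline{k}})=\{0\}$, hence $v_{\overline{k}}=0$, contradicting $\|v_{\overline{k}}\|_2=1$. Therefore no unit-modulus eigenvalue exists and $\rho(\mathbf{B})<1$; feeding this into (\ref{eq:mean_recursion_dist}) shows $\mathbb{E}\{\widetilde{\mathbf{h}}_{\text{dec}}(n)\}\to\mathbf{0}$, i.e.\ convergence in the mean.

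I expect the main obstacle to be precisely the gap between $\rho(\mathbf{B})\le 1$ and the strict inequality: the block-maximum norm is only submultiplicative, and because singular blocks $\mathbf{C}_{z,i}$ make $\|\mathbf{D}\|_{b,\infty}=1$, norm bounds alone cannot close the argument. The delicate steps are the exact identification of unit-modulus eigenvectors of $\mathbf{D}_i$ with $\ker(\mathbf{C}_{z,i})$ (which uses that the step-size is \emph{strictly} below $2/\rho(\mathbf{C}_{z,i})$, ruling out the eigenvalue $-1$), and the combinatorial propagation that turns the single non-singular block at $\overline{k}$ into a global contradiction via irreducibility. I would also verify the convention that $a_{i,l}>0$ encodes an out-edge $i\to l$, so that the reachability supplied by irreducibility matches the direction in which membership of $\mathcal{S}$ propagates.
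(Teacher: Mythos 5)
Your proof is correct and follows essentially the same route as the paper's Appendix C: a contradiction argument on a unit-modulus eigenvector of $\mathbf{B}=\mathcal{A}\mathcal{T}$, an equality chain forced by the row-stochasticity of $\mA$, propagation of that equality to all blocks via irreducibility, and a final contradiction at the non-singular block $\overline{k}$. Your version adds a slightly sharper intermediate step (identifying the equality case $\|\mathbf{D}_l v_l\|=\|v_l\|$ with $v_l\in\ker(\mathbf{C}_{z,l})$, using that the step-size bound excludes the eigenvalue $-1$), but this does not change the substance of the argument.
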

\begin{proof}
    See Appendix C in Supplementary Material.
\end{proof}

\begin{remark}
Theorem~2 guarantees mean convergence under the assumption that only a single edge agent can asymptotically learn the true parameter \( \h^o \), provided the communication topology induces an irreducible matrix \( \mA \). This condition requires the topology to be connected. If this is not the case, it follows directly from Theorem~\ref{distr} that, to ensure convergence across the entire simplicial complex, it is sufficient for each connected component to contain at least one agent with a non-degenerate local covariance $\mathbf{C}_{z,i}$.    
\end{remark}

With regard to the mean-square behavior of Algorithm~\ref{alg:topolms_distr}, from (\ref{eq:error_recursion2}), hinging on Assumptions~3 and~4, we obtain
\begin{equation}
\mathbb{E}\{\|\widetilde{\mathbf{h}}_{\text{dec}}(n+1)\|_{\boldsymbol{\Sigma}}^2\} = \mathbb{E}\{\|\widetilde{\mathbf{h}}_{\text{dec}}(n)\|_{\boldsymbol{\Sigma}'}^2\}+\text{Tr}(\boldsymbol{\Sigma}\mathcal{A} \mathbf{M} \mathbf{G} \mathbf{M} \mathcal{A}^T)
\end{equation}
for any positive semi-definite matrix $\boldsymbol{\Sigma}$, with
\begin{align}\label{eq:rec_distr}
\boldsymbol{\Sigma}' & = \mathbb{E}\{\mathbf{B}^T(n)\boldsymbol{\Sigma}\mathbf{B}(n)\}, \\
\mathbf{G} & = \mathbb{E}\{\bg(n)\bg^T(n)\} = \text{diag}\{\sigma_{v,i}^2\mathbf{C}_{z,i}\}_{i=1,...,E}, \label{G}.
\end{align}
Let $\boldsymbol{\sigma} = \text{vec}(\boldsymbol{\Sigma})$ represent the vector that aggregates the columns of $\boldsymbol{\Sigma}$ vertically. Similarly, let $\boldsymbol{\sigma}' = \text{vec}(\boldsymbol{\Sigma}')$. The relationship between $\boldsymbol{\sigma}'$ and $\boldsymbol{\sigma}$ can be expressed as $
\boldsymbol{\sigma}' = \mathbf{F}\boldsymbol{\sigma},$
where
$
\mathbf{F} = \mathbb{E}\{\mathbf{B}^T(n) \otimes \mathbf{B}^T(n)\}.$
Since the the fourth-order moments of the flow signals are not known beforehand, the explicit evaluation of $\mathbf{F}$ is unavailable in this analysis. As discussed in Sec. III, the approximation $\mathbf{F} \approx \mathbf{B}^T \otimes \mathbf{B}^T$ is valid for sufficiently small step-sizes, where terms involving higher-order powers can be neglected \cite{sayed2011adaptive, sayed2014adaptation}. Under this condition, the stability of $\mathbf{F}$ holds if $\rho(\mathbf{B}) < 1$ (which is guaranteed under the assumptions of Theorem \ref{distr}). 

Using the notation $\|\cdot \|_{\boldsymbol{\sigma}}$ to indicate the norm $\|\cdot \|_{\boldsymbol{\Sigma}}$, the recursion (\ref{eq:rec_distr}) can therefore be rewritten as follows:
\begin{equation}\label{eq_var_rec_dist}
\mathbb{E}\{\|\widetilde{\mathbf{h}}_{\text{dec}}(n+1)\|_{\boldsymbol{\sigma}}^2\} = \mathbb{E}\{\|\widetilde{\mathbf{h}}_{\text{dec}}(n)\|_{\boldsymbol{\mathbf{F}\sigma}}^2\}+ \text{vec}\left(\mathcal{A} \mathbf{M} \mathbf{G} \mathbf{M} \mathcal{A}^T\right)^T \boldsymbol{\sigma} \ .
\end{equation}
If $\mathbf{F}$ is stable, the variance recursion (\ref{eq_var_rec_dist}) converges to:
\begin{equation}\label{eq:limit-point_distr}
\lim_{n \to \infty} \mathbb{E}\{\|\widetilde{\mathbf{h}}_{\text{dec}}(n)\|_{\boldsymbol{\sigma}}^2\} = \text{vec}\left(\mathcal{A} \mathbf{M} \mathbf{G} \mathbf{M} \mathcal{A}^T\right)^T (\mathbf{I} - \mathbf{F})^{-1}\boldsymbol{\sigma}.   
\end{equation}
From (\ref{eq:limit-point_distr}), we derive the MSD by choosing $\boldsymbol{\sigma}=\text{vec}(\mathbf{I})$.

\section{Numerical Results}
In this section, we assess the performance of the proposed methods both on real and synthetic data.

\subsection{Topological LMS}
The first validation of Algorithm (\ref{alg:topolms}) is performed on synthetic data. The topological domain, i.e., the simplicial complex, is randomly constructed by generating a set of nodes, adding random edges between node pairs, and forming triangles from 3-node cliques, resulting in a 2D simplicial complex. In our case, $\mathcal{X}^2$ comprises $33$ nodes, $159$ edges, and $121$ triangles, with flow signals collected on edges ($k = 1$ in (\ref{signals})). We use the model in \eqref{eq:lin_obs} to generate edge signal observations, affected by additive white Gaussian measurement noise (AWGN) and with the filter degree $M=2$. In particular, the variance of the noise varies over the edges, being randomly selected from $\{  10^{-6},10^{-4},10^{-3},10^{-2}\}$. We test Algorithm (\ref{alg:topolms}), setting the step-size $\mu$  equal to $10^{-2}$, with the goal of estimating the true filter coefficients $\h^o$ that were chosen at random. 

\begin{figure}[t]
    \centering
    \includegraphics[width = .85\linewidth]{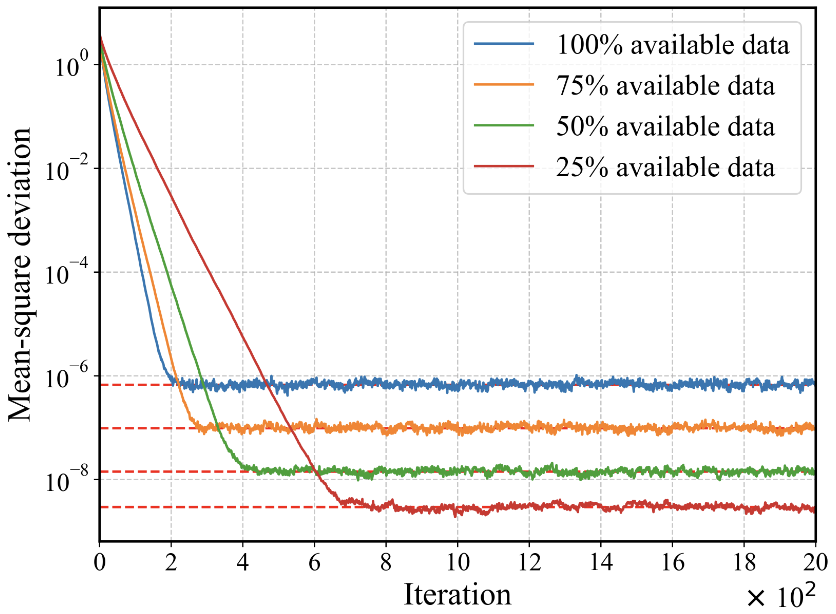}
    \caption{MSD vs iteration index, for different percentages of observed edges, compared with theoretical expressions. \vspace{-.5cm}}
    \label{fig:synth}
\end{figure}

We perform four independent simulations, with different percentages of edges available to observe the signals. Precisely, in each simulation, a prescribed percentage of edges has a non-zero probability of being sampled, that correspond to the diagonal entries of $\mathbf{P}$ that are greater than zero. In Fig.~\ref{fig:synth}, we show the  temporal MSD behavior for each case, averaged over 30 realizations, along with the corresponding theoretical steady-state values in (\ref{eq:limit-point})) depicted as dashed horizontal lines. As observed, the simulation results closely align with the theoretical predictions. We designed the test such that, for each sampling percentage $X$, the $X$-th percentile of edges with the lowest noise variance are assigned a non-zero sampling probability. In this way, from Fig.\ref{fig:synth}, we can appreciate the interesting tradeoff between convergence rate and MSD at steady state.  In particular, note that, using only the $25\%$ of edges, the convergence speed decreases, due to the slower diffusion over the complex. On the other hand, in this case, the MSD at steady state is the lowest among all simulations, since only the cleanest signal components are sampled. Vice versa, in the $100\%$ case, the algorithm has access to the entire signal for learning, but it also includes more noise: this leads to a faster convergence but a worse MSD. 


\begin{figure}[t]
    \centering
    \includegraphics[width =.85\linewidth]{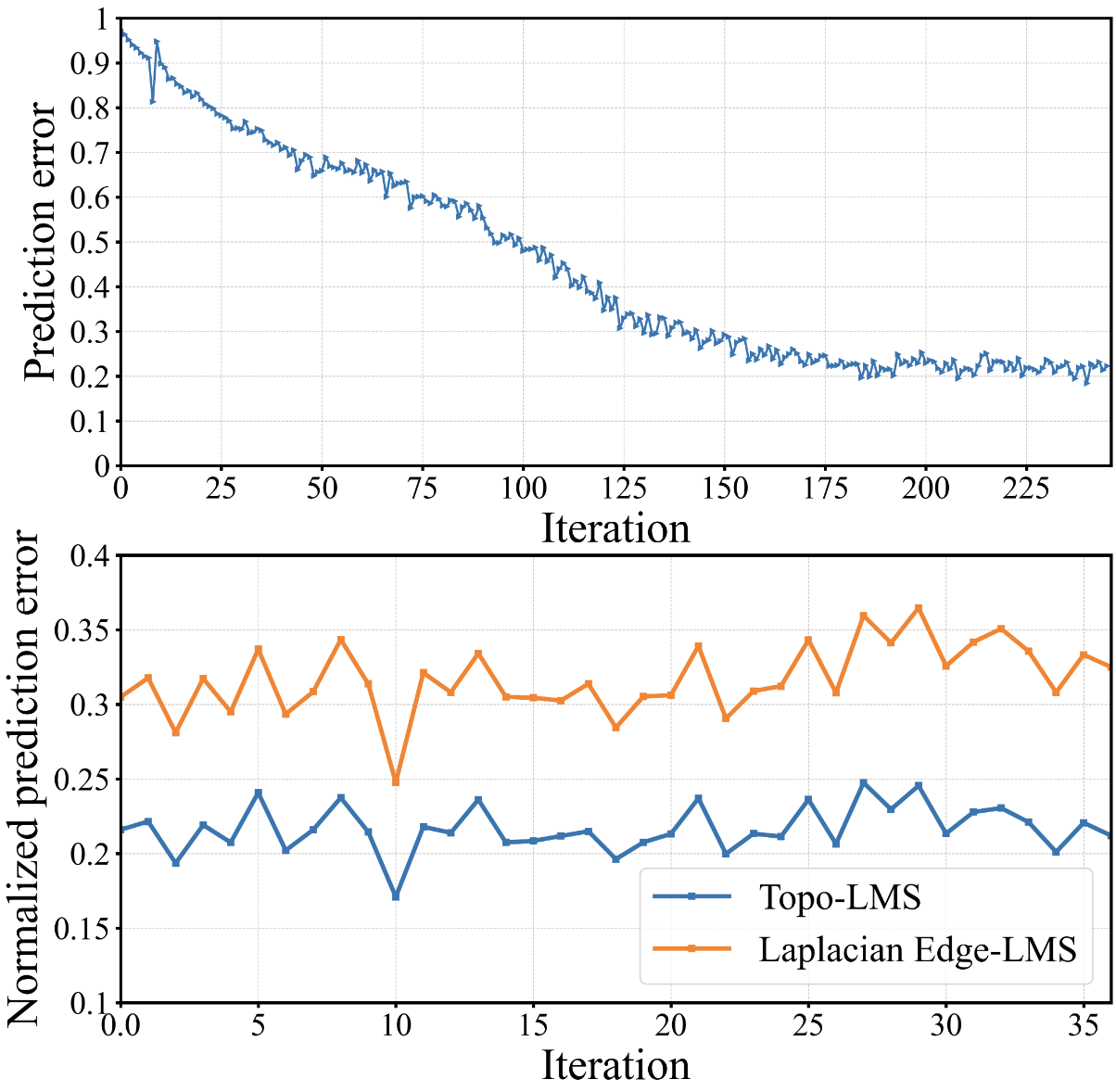}
    \caption{(\emph{Top}) Prediction error over training data. (\emph{Bottom}) Performance comparison of different methods over test data.} 
    \label{fig:real1}
\end{figure}

We evaluate the proposed algorithm on real data from the German National Research and Education Network (DFN) \cite{orlowski2010sndlib}, modeled as a 2-dimensional simplicial complex by adding triangles to all triadic cliques. The resulting complex has 17 nodes, 26 edges, and 5 triangles. The dataset includes traffic measurements (in Mbit/sec) collected on February 1st, 2005, averaged over 5-minute intervals, yielding 288 edge signal snapshots. The first 250 are used for training, and the remaining 38 for testing. We assume full signal observability across the network ($\mathbf{P} = \mathbf{I}$) and use the model in (\ref{eq:lin_obs}) in an autoregressive manner. Consequently, the objective of Algorithm (\ref{alg:topolms}) is to minimize, with respect $\h$, the empirical cost function
\begin{align}
 \tilde{J}(\h)=\frac{1}{250-M}\sum_{n=M+1}^{250}\Bigg\|\mathbf{x}(n)&-\sum_{m=1}^{M}h_{m,u}\big(\L_u\big)^m \mathbf{x}(n - m) \nonumber\\
       &\hspace{-.8cm}- \sum_{m=1}^{M}h_{m,d}\big(\L_d\big)^m \mathbf{x}(n - m)\Bigg\|^2 , \nonumber
\end{align}
given the edge signal $\mathbf{x}(n)$ at time $n$ and the edge signals at previous times $\mathbf{x}(n-m)$, for $m=1, \ldots, M$, $n=M+1,...,250$. For this test, we set the filter order $M$ to $3$ and the step-size $\mu$ to $10^{-4}$. The performance of the Topo-LMS are evaluated in comparison with an alternative LMS algorithm, which only considers the graph structure of the network. This algorithm results from using a signal model that only leverages the Edge Laplacian \cite{barbarossa2020topological}, corresponding to (\ref{eq:lin_obs}) in which the coefficients $h^o_{m,u}$ are set equal to zero. The Edge Laplacian parametrization offers a fair baseline, as it constitutes the most straightforward extension of the graph-based approach in \cite{nassif2018distributed} to simplicial complexes, without incorporating Hodge theory. The results are presented in Fig.~\ref{fig:real1}, where the top panel shows the evolution of the estimation error during training, and the bottom panel illustrates the normalized prediction error using the filter coefficients learned in the first phase. Notably, Fig.~\ref{fig:real1} (Bottom) highlights that our Topo-LMS outperforms the graph-based approach in predicting the signal's evolution, thanks to its ability to leverage higher-order topological information.

\begin{figure}[t]
    \centering
    \includegraphics[width =.85\linewidth]{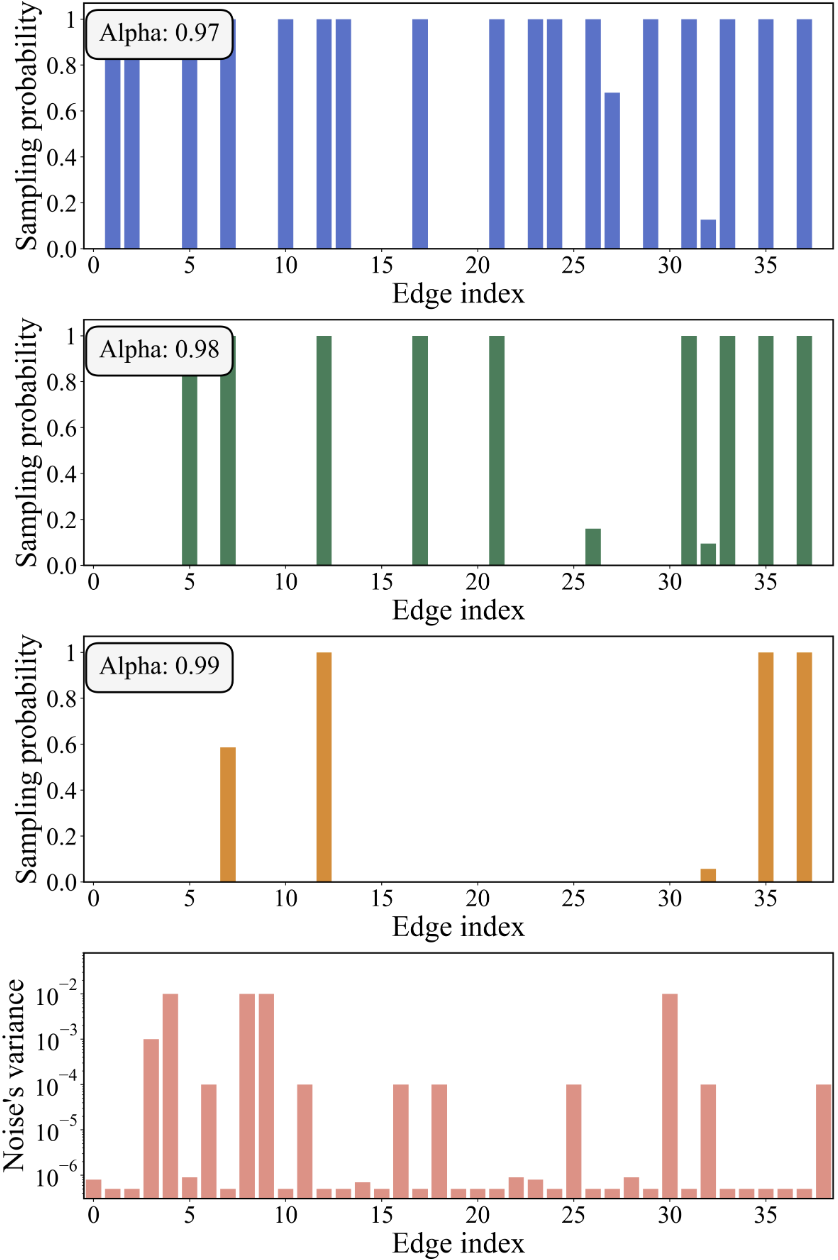}
    \caption{Optimal sampling probabilities, obtained as solution of Problem \eqref{prob_2} for different values of $\alpha$. } 
    \label{fig:sampling_2}
\end{figure}

\subsection{Optimal sampling strategies}
In this subsection, we validate the sampling strategies proposed in Sec.~III\textit{B} through numerical tests on a randomly generated 2-dimensional simplicial complex (26 nodes, 39 edges, 2 triangles), constructed using the iterative procedure from Sec.~V\textit{A}. Edge signals follow the model in (\ref{eq:lin_obs}), with filter order $M = 3$ and zero-mean AWGN whose variance varies per edge, randomly sampled from $[10^{-7}, 10^{-2}]$. We solve three sampling problems as in (\ref{prob_2}) using CVX in Python~\cite{CVX}, all under a steady-state MSD constraint $\gamma = 10^{-7}$, but with different convergence rate bounds: $\alpha_1 = 0.97$, $\alpha_2 = 0.98$, and $\alpha_3 = 0.99$. The remaining parameters are $\mu = 10^{-2}$ and $\mathbf{p}_\text{max} = \mathbf{1}$. Each solution yields a sampling probability vector that minimizes the rate while satisfying the constraints, accounting for both topological structure and noise levels.


\begin{figure}[t]
    \centering
    \includegraphics[width =.85\linewidth]{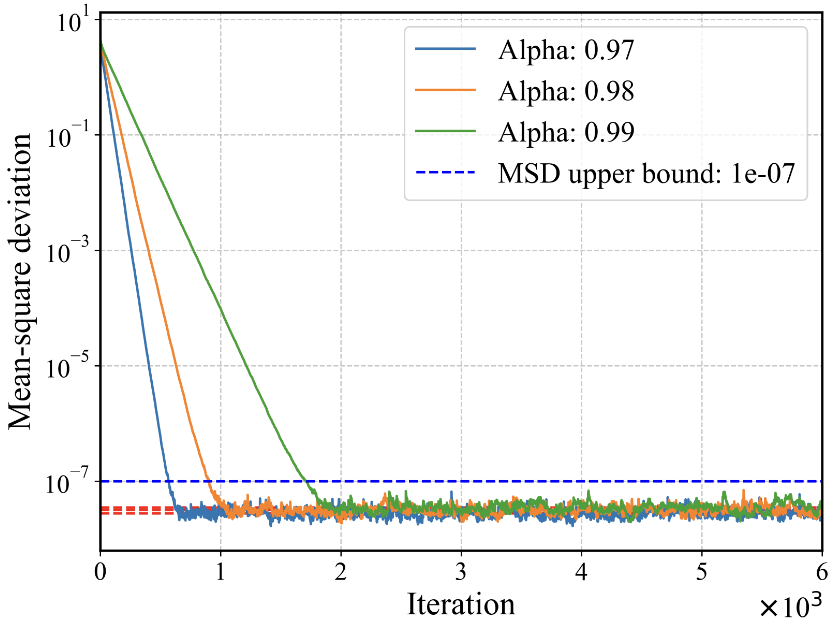}
    \caption{MSD's evolution over time with varying convergence speed of Algorithm (\ref{alg:topolms}).} 
    \label{fig:sampling_1}
\end{figure}
As illustrated in Fig.\ref{fig:sampling_2}, the expected sampling set excludes the edges on which the signal is more noisy, i.e. the measurement noises have the highest variances. In addition, it is notable that when the required convergence speed increases, i.e. the convergence rate $\alpha$ is smaller, the sampling set is enlarged, since the algorithm needs more information over the edges to speed up the process. It is also interesting to notice that the solutions are naturally sparse, as shown in Figure \ref{fig:sampling_2}, since the sampling probability are different from zero only in a small subset of edges. Finally, we generate three edge signal time series using the model in (\ref{eq:lin_obs}), where the probability of sampling at each edge is determined by the three optimal solutions obtained from the previous problems. In other words, each time series corresponds to a distinct sampling scheme. Fig.~\ref{fig:sampling_1} illustrates the different rates at which the adaptive method reaches steady state, depending on the imposed constraints on the convergence rate of the sampling mechanisms. We also observe that the theoretical MSD upper bound, $\gamma = 10^{-7}$, is clearly achieved in all three cases.

\begin{figure}[t]
    \centering
    \includegraphics[width =.85\linewidth]{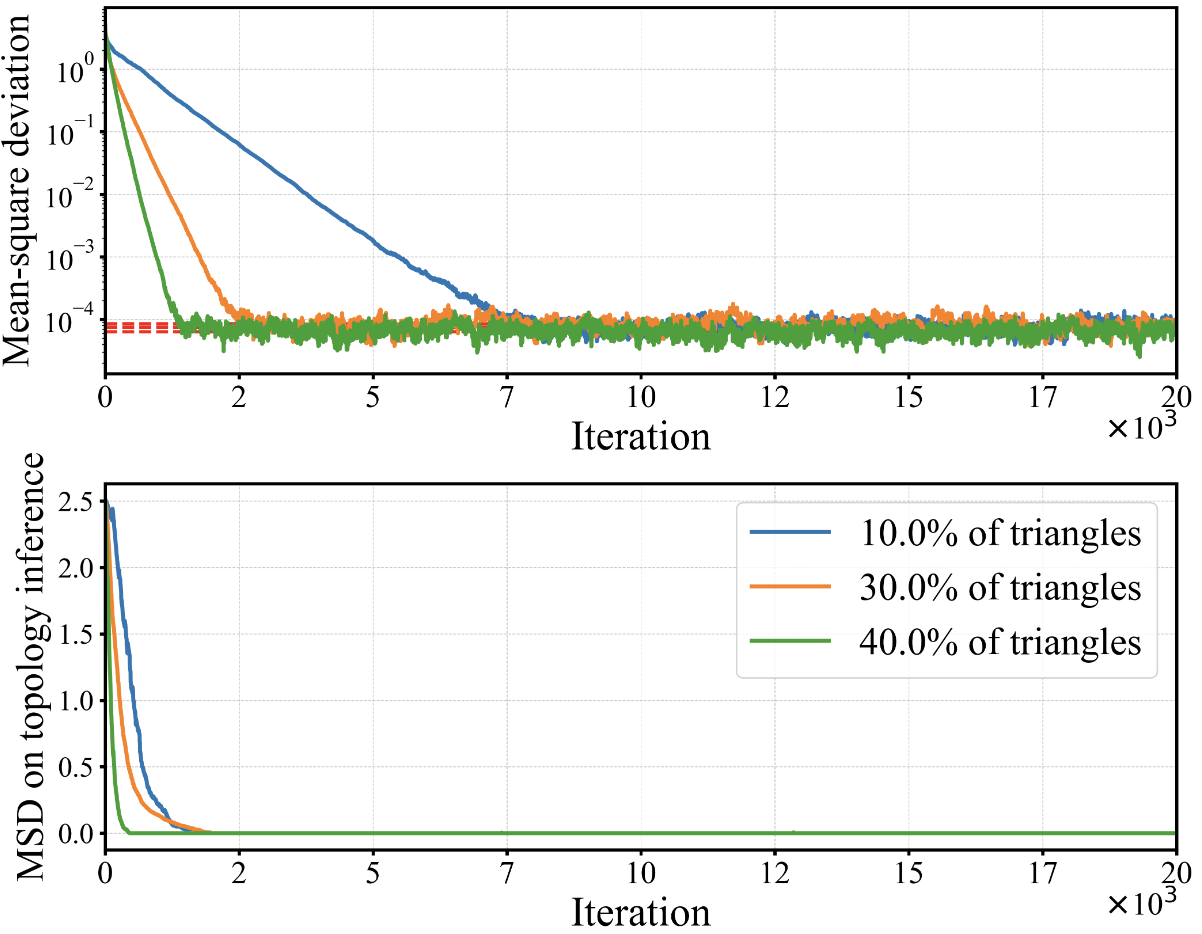}
    \caption{(\emph{Top}) Filter estimation's error over the three simulations. (\emph{Bottom}) Topology estimation's error over the three simulations. } 
    \label{fig:top_inference}
\end{figure}

\subsection{Topology Inference and Tracking}

In this subsection, we evaluate the performance of Algorithm~(\ref{eq:proxgd}) in recovering the underlying topological structure of the data. In the first experiment, we consider three randomly generated simplicial complexes of dimension $2$, which  are assumed to share the same $1$-skeleton, consisting in $35$ vertices, $115$ edges and $90$ triadic cliques. To complete each complex, a prescribed percentage of these cliques is sampled to be filled with triangles. Time series of edge signals are then generated over each complex independently, according to the generative model in (\ref{eq:lin_obs}), with AWGN measurement noise and filter order $M=2$. The noise's variance is allowed to vary over the complex, and for each edge is sampled independently from $\{10^{-6},10^{-4},10^{-3},10^{-2}\}$. We conduct three simulations, one for each simplicial complex, with the aim of recovering both the true filter coefficients $\h^o$ and the topology of the complexes, by estimating the triangles' identifiers $\mathbf{t}^o$. We then run Algorithm~(\ref{eq:proxgd}) with the step-sizes $\mu_1$, $\mu_2$ set equal to $10^{-2}$ and the thresholds $\lambda_0,\lambda_1$ set equal to $0.1$. In Fig.\ref{fig:top_inference}, we show the MSD behavior of the algorithm on the estimation of the filter coefficients (\textit{Top}) and the topology identifiers (\textit{Bottom}) in the three simulations. Notably, the learning speed is slower in the simplicial complex with $10 \%$ of the possible triangles, due to the increased complexity in correctly determining a topology with a sparse structure. However, in all three cases, the complex topology is perfectly identified, since the algorithm infers precisely the true vector $\mathbf{t}^o$, which was used to generate the data. Finally, once the topology becomes fully known, the steady-state MSD in estimating the true filter coefficients $\h^o$ reaches the theoretical limit in (\ref{eq:limit-point}), as shown in Fig.~\ref{fig:top_inference} (\textit{Top}).

\begin{figure}[t]
    \centering
    \includegraphics[width =.85\linewidth]{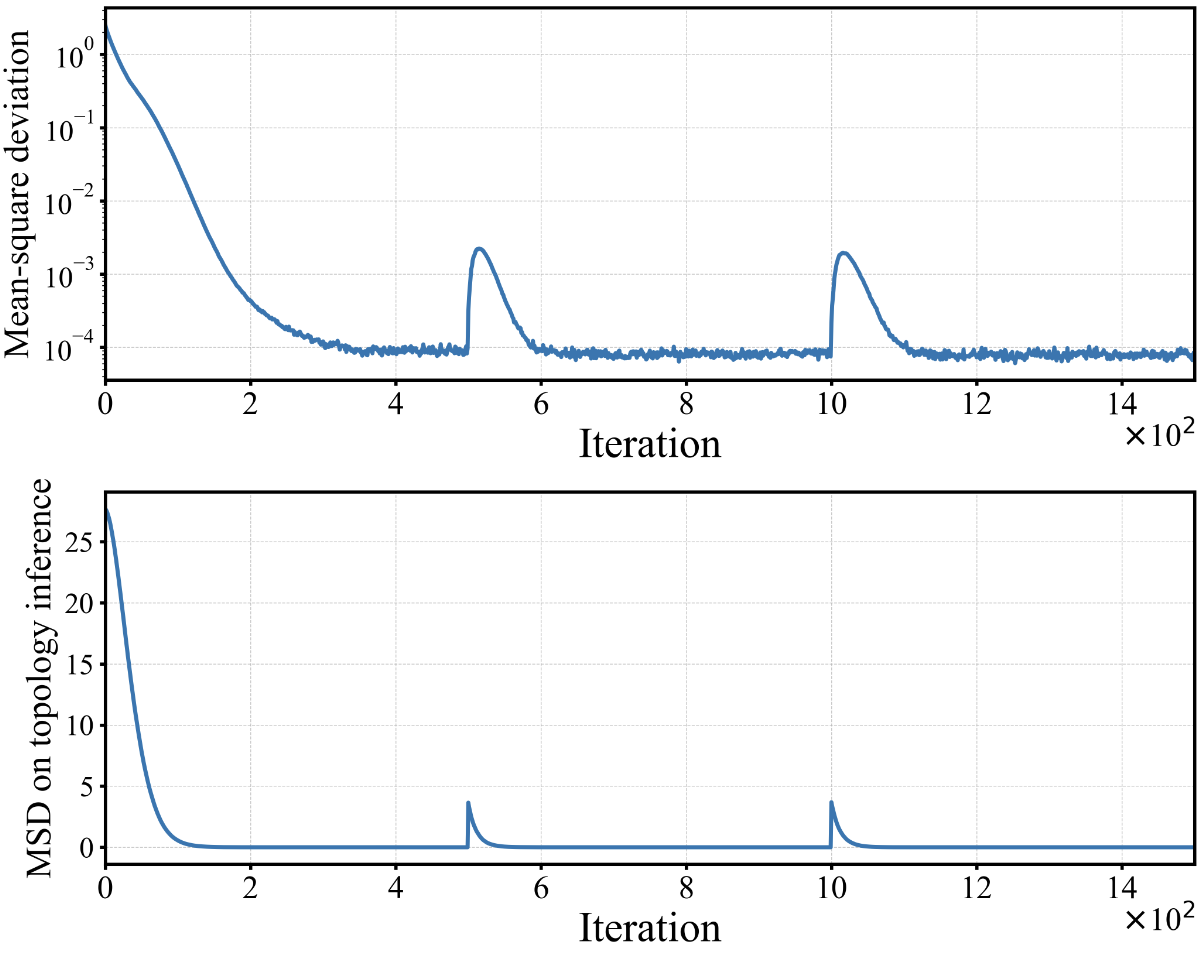}
    \caption{(\emph{Top}) Filter estimation's error with time-varying topology of the simplicial complex. (\emph{Bottom}) Topology estimation's error with time-varying topology of the simplicial complex. } 
    \label{fig:top_inference_2}
\end{figure}

We conduct an additional experiment to evaluate the algorithm’s ability to adapt to temporal changes in the topology of a simplicial complex. Specifically, we randomly generate a 2-dimensional simplicial complex consisting of 30 vertices, 139 edges, and 51 triangles. Edge signal observations are produced as in the previous simulation, using the same parameter settings. We then apply Algorithm~(\ref{eq:proxgd}) to jointly recover and track both the true filter coefficients $\h^o$ and the triangle identifiers $\mathbf{t}^o$, which encode the evolving topology of the simplicial complex.
To simulate a time-varying topology, we remove 4 triangles from the complex after the first 1500 signal observations, and another 4 triangles after the next 1500 observations. Figure~\ref{fig:top_inference_2} shows the algorithm’s mean squared deviation (MSD) performance in estimating the filter coefficients (top panel) and the topology identifiers (bottom panel) over time, averaged over $30$ realizations. Notably, in Figure~\ref{fig:top_inference_2} (\textit{Bottom}), the algorithm demonstrates rapid adaptation to the topology changes, accurately identifying the active triangles in the simplicial complex within a few iterations. Furthermore, as shown in Figure~\ref{fig:top_inference_2} (\textit{Top}), once the topology is correctly inferred, the algorithm quickly stabilizes and reaches a steady state after each perturbation.

\subsection{Distributed Topological LMS}

We first assess the performance of Algorithm (\ref{alg:topolms_distr}) on synthetic data. First, we randomly generate a simplicial complex, using the iterative method described in Sec. VI.\textit{A}. The resulting complex $\mathcal{X}^2$ includes $11$ nodes, $15$ edges and $10$ triangles and we process edge flow signals defined on it. We generate a dataset using the model in \eqref{eq:lin_obs}, with the filter order $M$ set equal to 2. The communication topology is connected, and the combining coefficients $a_{k,l}$ are chosen unsing a uniform rule in order to satisfy the hypothesis of Theorem \ref{distr} and ensure convergence of the method. The measurement noise is a zero-mean AWGN, where the variance on each edge is independently selected from the interval $[ 10^{-7},10^{-5} ]$. We run Algorithm (\ref{alg:topolms_distr}) aiming to recover on each edge the true filter coefficients $\h^o$ we used to generate the dataset. The step-size $\mu_k$ is set equal to $10^{-2}$ for each edge $k$. In Figure \ref{fig:LMS_distr}, we show the evolution over time of the estimation error, averaging over $30$ realizations and we also report the corresponding theoretical limit value from (\ref{eq:limit-point_distr}), which is clearly attained.

\begin{figure}[t]
    \centering
    \includegraphics[width =.85\linewidth]{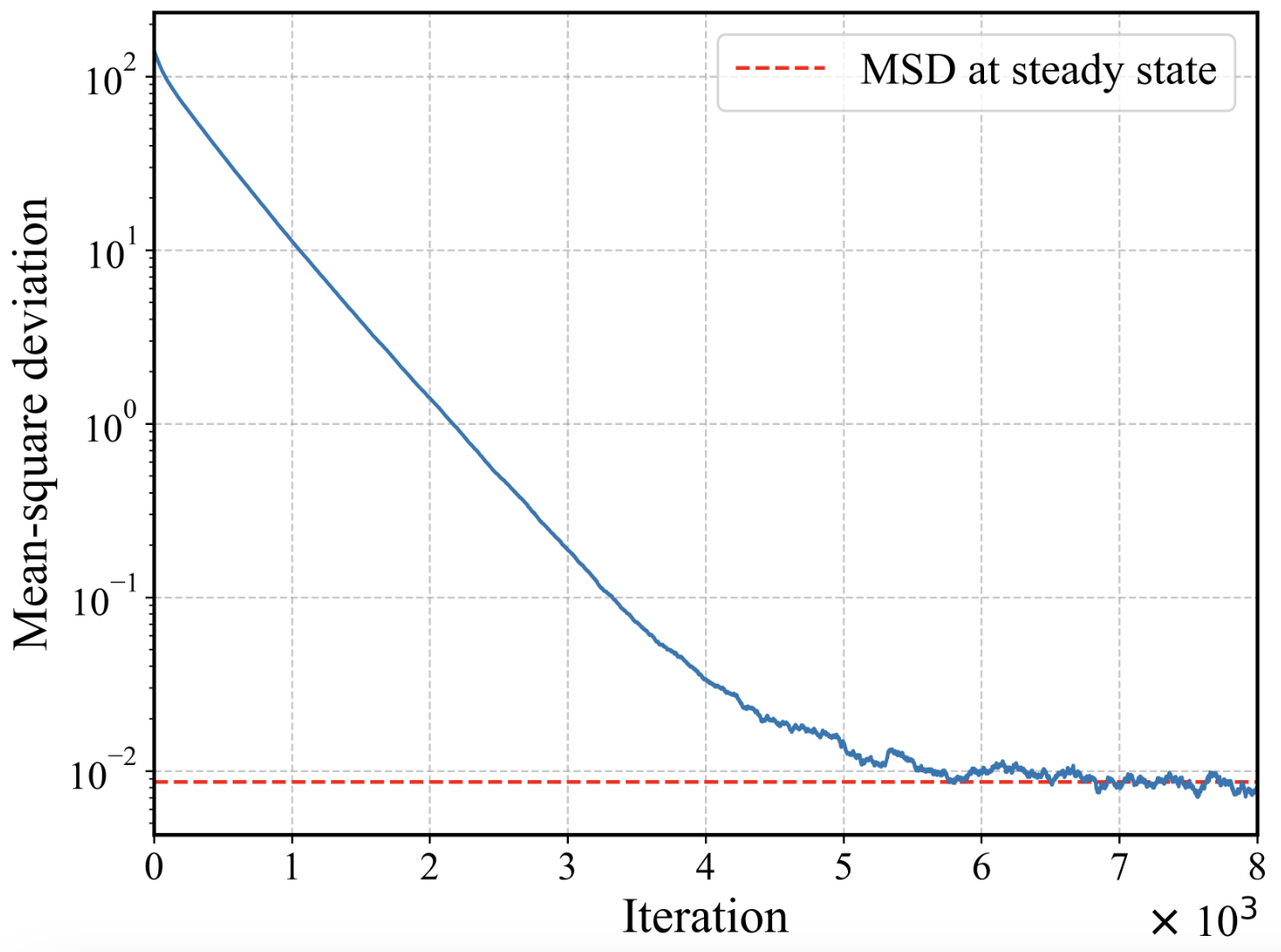}
    \caption{MSD's evolution over time of Algorithm (\ref{alg:topolms_distr}).} 
    \label{fig:LMS_distr}
\end{figure}
\begin{figure}[t]
    \centering
    \includegraphics[width =.85\linewidth]{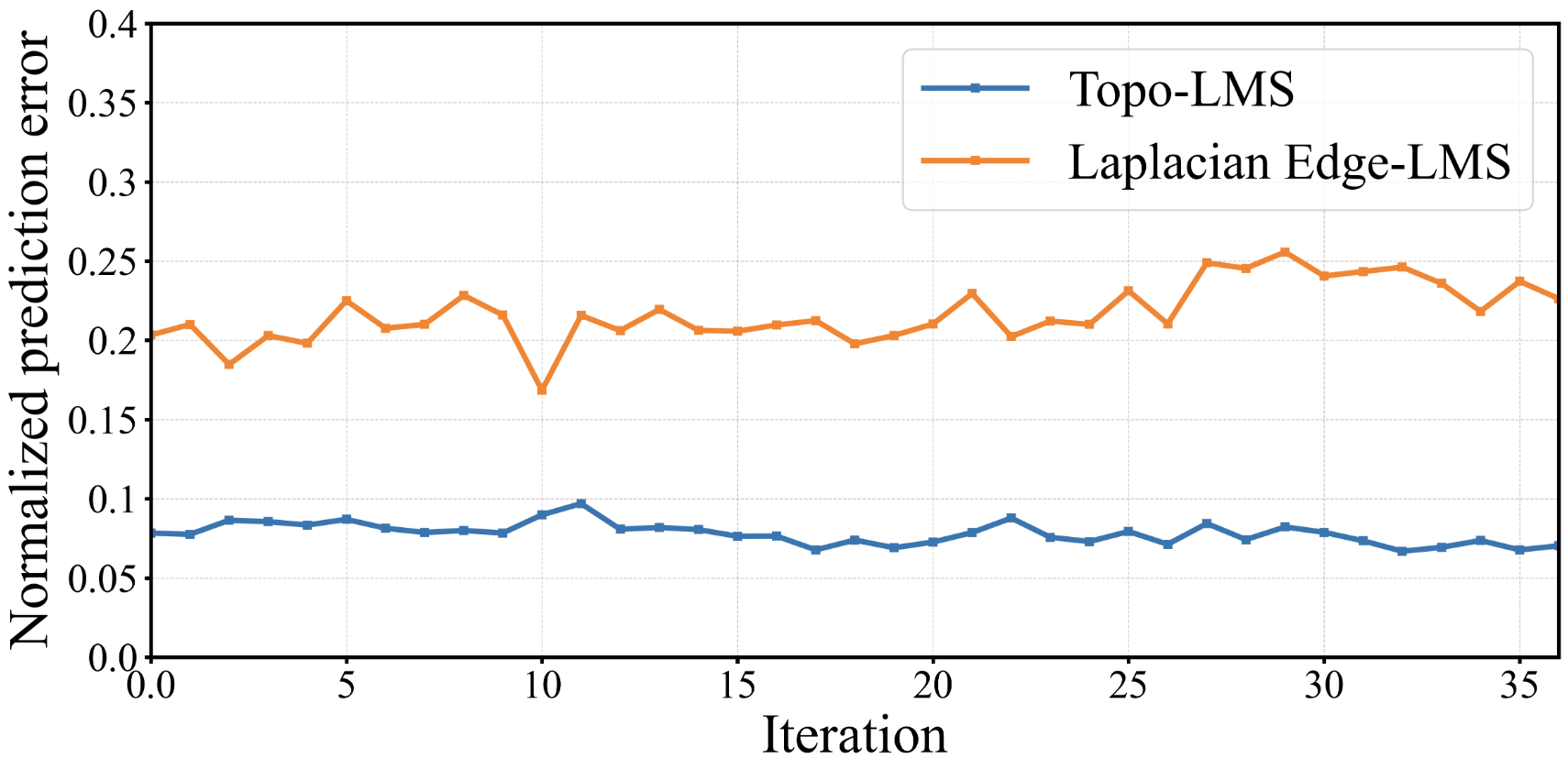}
    \caption{ Performance comparison over test data of the Topo-LMS algorithm and the graph-based Laplacian Edge-LMS algorithm.} 
    \label{fig:real_data_distr}
\end{figure}

We now test the distributed algorithm's performance on a real dataset. As for the centralized algorithm, we consider the German DFN-Verein network presented in Sec. V.\textit{A}, to process the same data in a distributed fashion. The step-size $\mu$ is chosen to be constant over the edges and equal to $10^{-1}$ and we set the filter degree $M=2$. For this experiment as well, we do not sample the signal, assuming it to be observable all over the network at each time $n$, thus imposing $\mathbf{P}=\mathbf{I}$. The generating process of the time series is assumed to be an autoregressive version of the signal model in (\ref{eq:lin_obs}). We use the first 250 observations to learn the filter coefficients, whereas the last 38 are used to test the performance of the algorithm. Compared to the centralized cases, the learning process is generally slower in distributed methods, as it is carried out by a subset of agents over the edges, that diffuse their knowledge throughout the network. Consequently, running Algorithm (\ref{alg:topolms_distr}) for a single epoch may not be sufficient to effectively train the model. Thus, since the available real data is limited, we employ the so-called \textit{empirical} stochastic gradient descent implementation, in which we run Algorithm (\ref{alg:topolms_distr}) for $N_e=30$ epochs. Formally, we extend the original dataset by defining $\mathbf{x}(i) = \mathbf{x}(d)$ for $i=1,...,N_e\cdot250$, where $d = i\ \text{mod} \ 250$ (mod being the modulo operator) and $d=1,...,250$ \cite{nassif2018distributed}. Once trained, we compare the test estimation performance with the distributed implementation of the Edge Laplacian parametrization, described in Sec. V.\textit{A}, which only leverages the graph structure of the network. As shown in Figure \ref{fig:real_data_distr}, our distributed Topo-LMS is able to reduce the prediction error w.r.t. the graph-based method, since it exploits higher-order topological information. Finally, it is noteworthy that the distributed version outperforms the centralized algorithm in (\ref{alg:topolms}) in predicting the evolution of the edge signal on test data. As shown in Fig.~\ref{fig:real1} (\textit{Bottom}) and Fig.~\ref{fig:real_data_distr}, the average prediction error of the distributed implementation is significantly lower than that of the centralized counterpart. This improvement can be attributed to the data augmentation procedure used during training, which enables the distributed algorithm to learn more effectively the model's optimal parameters.

\section{Conclusions}

In this paper, we introduced novel adaptive LMS methods for processing edge signals on simplicial complexes, leveraging their topological structure for online learning. We established convergence guarantees for the Topo-LMS algorithm, including explicit expressions for its convergence rate and steady-state MSD. A probabilistic framework was also proposed to optimize edge sampling under performance constraints. To infer the complex's topology from edge observations, we developed an alternating adaptive method that recovers latent higher-order relationships, integrated with Topo-LMS. For decentralized scenarios, we introduced a diffusion-based distributed variant with stability guarantees and closed-form MSD analysis incorporating topology, sampling, and signal features. Numerical experiments confirmed the advantages of our methods over graph-based baselines. More broadly, this work lays the groundwork for adaptive signal processing on dynamic, higher-order topological domains. While focused on edge signals, our framework extends naturally to arbitrary-order signals and more general topologies like regular cell complexes. Future directions include designing filters for other topological features and jointly processing multi-order signals via tools like the Dirac operator. Overall, our adaptive topological framework is theoretically grounded and practically flexible, with broad relevance to networked systems.

\bibliographystyle{IEEEbib}
\bibliography{biblio}

\clearpage

\twocolumn[{%
    \begin{center}
        \textbf{\Large Supplementary Material} \medskip
    \end{center}
}]

\appendices
\section{Proof Of Theorem 1}
We begin by establishing a preliminary result.
\begin{lemma}\label{lem}
    Let $\mathbf{Z}$ be a positive semi-definite matrix and let $\lambda_{\text{max}}(\mathbf{Z})$ be its maximal eigenvalue. Given $\delta >0$, for any $0< \mu < \displaystyle\frac{1}{\lambda_{\text{max}}(\mathbf{Z})}-\delta $ there exists a matrix $\mathbf{R}(\mu)$ such that 
\begin{equation}
(\mathbf{I}-\mu \mathbf{Z})^{-1}=\mathbf{I}+\mu\mathbf{R}(\mu), \label{lem1}
\end{equation}
with $\Vert \mathbf{R}(\mu)\Vert_{2} < 1/\delta$ (not depending on $\mu$).
\end{lemma}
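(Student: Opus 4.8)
The plan is to produce $\mathbf{R}(\mu)$ explicitly and reduce the whole statement to a one-line eigenvalue estimate. First I would record invertibility: since $\mathbf{Z}$ is positive semi-definite, its eigenvalues satisfy $0\le\lambda_i\le\lambda_{\text{max}}(\mathbf{Z})$, and the hypothesis $\mu<1/\lambda_{\text{max}}(\mathbf{Z})-\delta$ gives $1-\mu\lambda_i\ge 1-\mu\lambda_{\text{max}}(\mathbf{Z})>\delta\,\lambda_{\text{max}}(\mathbf{Z})\ge 0$, so every eigenvalue of $\mathbf{I}-\mu\mathbf{Z}$ is strictly positive and the inverse exists. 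This same computation suggests the right choice, namely $\mathbf{R}(\mu):=\mu^{-1}\big[(\mathbf{I}-\mu\mathbf{Z})^{-1}-\mathbf{I}\big]$, which makes the identity (\ref{lem1}) hold by construction; what remains is purely the norm bound.

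Next I would diagonalize $\mathbf{Z}=\mathbf{U}\Lambda\mathbf{U}^T$ with $\mathbf{U}$ orthogonal, which is available because $\mathbf{Z}$ is symmetric. Since conjugation by $\mathbf{U}$ commutes with the scalar map $z\mapsto \mu^{-1}\big((1-\mu z)^{-1}-1\big)$ applied to the eigenvalues, I get $\mathbf{R}(\mu)=\mathbf{U}\,\mathrm{diag}(r_i)\,\mathbf{U}^T$ with $r_i=\mu^{-1}\big((1-\mu\lambda_i)^{-1}-1\big)=\lambda_i/(1-\mu\lambda_i)$. As $\mathbf{R}(\mu)$ is symmetric, $\|\mathbf{R}(\mu)\|_2=\max_i|r_i|$, and every $r_i\ge 0$ because numerator and denominator are both nonnegative under the hypothesis.

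The final step is to identify the maximizing index and insert the bound on $\mu$. I would check that $f(\lambda)=\lambda/(1-\mu\lambda)$ is increasing on $[0,\lambda_{\text{max}}(\mathbf{Z})]$, since $f'(\lambda)=(1-\mu\lambda)^{-2}>0$, so the largest eigenvalue of $\mathbf{R}(\mu)$ is attained at $\lambda=\lambda_{\text{max}}(\mathbf{Z})$, giving $\|\mathbf{R}(\mu)\|_2=\lambda_{\text{max}}(\mathbf{Z})/(1-\mu\lambda_{\text{max}}(\mathbf{Z}))$. Combining this with the inequality $1-\mu\lambda_{\text{max}}(\mathbf{Z})>\delta\,\lambda_{\text{max}}(\mathbf{Z})$ obtained in the first step yields $\|\mathbf{R}(\mu)\|_2<\lambda_{\text{max}}(\mathbf{Z})/(\delta\,\lambda_{\text{max}}(\mathbf{Z}))=1/\delta$, as claimed.

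There is no deep obstacle here; the only points needing care are two. The estimate $1/\delta$ must be independent of $\mu$, which holds because the constraint on $\mu$ directly produces the $\mu$-free lower bound $1-\mu\lambda_{\text{max}}(\mathbf{Z})>\delta\,\lambda_{\text{max}}(\mathbf{Z})$ used above. And the degenerate case $\mathbf{Z}=\mathbf{0}$ (equivalently $\lambda_{\text{max}}(\mathbf{Z})=0$) must be separated out before dividing by $\lambda_{\text{max}}(\mathbf{Z})$: there the hypothesis holds for every $\mu>0$ and $\mathbf{R}(\mu)=\mathbf{0}$ satisfies the bound trivially, so I would dispose of it in one line and run the general argument assuming $\lambda_{\text{max}}(\mathbf{Z})>0$.
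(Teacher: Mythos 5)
Your proof is correct, and it reaches the same key quantity as the paper, but by a slightly different mechanism. The paper's proof writes $(\mathbf{I}-\mu\mathbf{Z})^{-1}$ as the Neumann series $\sum_{m=0}^{\infty}\mu^{m}\mathbf{Z}^{m}$, peels off the $m=0$ term to define $\mathbf{R}(\mu)=\sum_{m=1}^{\infty}\mu^{m-1}\mathbf{Z}^{m}$, and then simply asserts the bound $\Vert\mathbf{R}(\mu)\Vert_{2}<1/\delta$ without spelling out the geometric-series summation $\Vert\mathbf{R}(\mu)\Vert_{2}\le \lambda_{\text{max}}(\mathbf{Z})/(1-\mu\lambda_{\text{max}}(\mathbf{Z}))$ and the substitution of the step-size constraint. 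You instead define $\mathbf{R}(\mu)=\mu^{-1}[(\mathbf{I}-\mu\mathbf{Z})^{-1}-\mathbf{I}]$ directly and diagonalize, obtaining the \emph{exact} value $\Vert\mathbf{R}(\mu)\Vert_{2}=\lambda_{\text{max}}(\mathbf{Z})/(1-\mu\lambda_{\text{max}}(\mathbf{Z}))$ rather than an upper bound; the two routes are equivalent because the series, summed in the eigenbasis, gives precisely $\lambda_i/(1-\mu\lambda_i)$. What your version buys is completeness: you make explicit the inequality $1-\mu\lambda_{\text{max}}(\mathbf{Z})>\delta\,\lambda_{\text{max}}(\mathbf{Z})$ that the paper uses silently, you verify invertibility up front (which also certifies convergence of the paper's series, since $\mu\lambda_{\text{max}}(\mathbf{Z})<1$), and you dispose of the degenerate case $\mathbf{Z}=\mathbf{0}$, where division by $\lambda_{\text{max}}(\mathbf{Z})$ would otherwise be illegitimate. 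No gaps.
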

\begin{proof} The proof can be derived from the matrix expansion 
\begin{eqnarray}
(\mathbf{I}-\mu \mathbf{A})^{-1} &=& \sum_{m=0}^{\infty}\mu^{m}\mathbf{Z}^{m}
  \notag \\ 
  &=& \mathbf{I} + \mu\sum_{m=1}^{\infty}\mu^{m-1}\mathbf{Z}^{m} = \mathbf{I}+\mu\mathbf{R}(\mu). \notag
\end{eqnarray}
Note that $\Vert \mathbf{R}(\mu)\Vert_{2} < 1/\delta$, concluding the proof.
\end{proof}

We now proceed proving Theorem 1, starting with the derivation of (\ref{eq:limit_var}). Let us recall expression (\ref{eq:limit-point}) with $\boldsymbol{\sigma}=\text{vec}(\mathbf{I})$: 
\begin{equation}\label{eq:limit-point2}
\lim_{n \to \infty} \mathbf{E}\{\|\widetilde{\mathbf{h}}(n)\|^2\} = \mu^2\text{vec}\left(\mathbf{G}\right)^T  (\mathbf{I} - \mathbf{F})^{-1}\text{vec}(\mathbf{I})\ ,   
\end{equation}
where $\mathbf{F}\cong\mathbf{Q} \otimes \mathbf{Q}$ and $\mathbf{Q}=\mathbf{I}-\mu \mathbf{C}_X$. Using the properties of the Kronecker product, the last term in (\ref{eq:limit-point2}) can be recast as:
\begin{align}
  (\mathbf{I} - &\mathbf{F})^{-1}\text{vec}\left(\mathbf{I}\right) =  \sum_{m=0}^{\infty}(\mathbf{Q} \otimes \mathbf{Q})^{m}\text{vec}\left(\mathbf{I}\right) 
  \notag \\ &=  \sum_{m=0}^{\infty}\left(\mathbf{Q}^{m} \otimes \mathbf{Q}^{m}\right)\text{vec}\left(\mathbf{I}\right) =\sum_{m=0}^{\infty}\text{vec}\left(\mathbf{Q}^{2m}\right)
  \notag \\ &= \text{vec}\left((\mathbf{I} - \mathbf{Q}^{2})^{-1}\right) . \label{equation_1}   
\end{align}
Since $\mathbf{Q}=\mathbf{I}-\mu \mathbf{C}_X$, the last matrix in (\ref{equation_1}) reads as:
\begin{align}
(\mathbf{I} - \mathbf{Q}^{2})^{-1} &= (2\mu \mathbf{C}_X - {\mu}^2 {\mathbf{C}^2_X})^{-1} = \frac{1}{2\mu} \mathbf{C}^{-1}_X \left(\mathbf{I} - \frac{\mu}{2}\mathbf{C}_X\right)^{-1}\notag \\ & \overset{(a)}{=}\, \frac{1}{2\mu} \mathbf{C}_X^{-1} \left(\mathbf{I} + \frac{\mu}{2}\mathbf{R}(\mu)\right) \label{equation_2}
\end{align}
where in $(a)$ we have applied Lemma 2 for $\mZ=\frac{1}{2}\mathbf{C}_X$. Note that the conditions of Lemma 2 hold in this case, since we have assumed that the step-size $\mu$ satisfies (\ref{eq:step}). Finally, recasting (\ref{eq:limit-point2}) through (\ref{equation_1})-(\ref{equation_2}), we get:
\begin{eqnarray}
\lim_{n \to \infty} \mathbb{E}\{\|\widetilde{\mathbf{h}}(n)\|^2\} &=& \mu^2\text{vec}\left(\mathbf{G}\right)^T\text{vec}\left((\mathbf{I} - \mathbf{Q}^{2})^{-1}\right) \notag \\
&=&\mu^2\text{Tr}\left(\mathbf{G}(\mathbf{I} - \mathbf{Q}^{2})^{-1}\right) \label{acca} \notag \\ 
&=& \frac{\mu}{2}\text{Tr}\left(\mathbf{G}\mathbf{C}_X^{-1}\right) + \frac{{\mu}^2}{4} \text{Tr}\left(\mathbf{G}\mathbf{C}_X^{-1}\mathbf{R}(\mu)\right) \notag \\ 
&=& \frac{\mu}{2}\text{Tr}\left(\mathbf{G}\mathbf{C}_X^{-1}\right) + \textit{O}({\mu}^2), \notag
\end{eqnarray}
which is the expression claimed in (\ref{eq:limit_var}). 

Now, we prove the second part of the Theorem, deriving the convergence rate $\alpha$ in (\ref{eq:convergence_rate}). We recall that $\alpha=\|\mathbf{F}\|_2$, with $\mathbf{F}\cong\mathbf{Q} \otimes \mathbf{Q}$ for a sufficiently small step-size $\mu$. Since $\mathbf{Q}=\mathbf{I}-\mu \mathbf{C}_X$, we have that:
    $\|\mathbf{F}\|_2=\|\mathbf{Q}\otimes \mathbf{Q}\|_2 =
    \|\mathbf{Q}\|^2_2 =\|\mathbf{I}-\mu \mathbf{C}_X\|_2^2$.
Moreover, it holds that:
\begin{eqnarray}
     \|\mathbf{I}-\mu \mathbf{C}_X\|_2^2&=& \text{\text{max}} \{ (1-\mu \delta)^2, (1-\mu \nu)^2\}\notag \\
    &\leq& 1-2 \mu \delta + \mu^2 \nu^2 \notag \\ &=& 1-2\mu \delta \left( 1-\frac{\mu \nu^2}{2 \delta}\right)  \notag\\
    & \overset{(a)}{\approx}& 1-2\mu \delta \notag
\end{eqnarray}
where $(a)$ holds when $ \mu \ll 2 \delta/\nu^2$.
This proves (\ref{eq:convergence_rate}), and concludes the proof of Theorem 1. 

\section{Proof Of Lemma 1}
Since problem (\ref{eq:prox}) is separable in each coordinate $u_i$ of $\mathbf{u}$, we can solve it independently for each component. Thus, we focus on the scalar case:
\begin{align}\label{eq:Prox_proof}
\mathcal{H}_{\lambda_0}^{\lambda_1}(v_i) =
\underset{u_i \in [0,1]}{\arg\min} \Bigg\{ &\frac{1}{2} (u_i - v_i)^2 
+ \lambda_0 \mathds{1}_{\{u_i \neq 0\}} + \lambda_1 \mathds{1}_{\{u_i \neq 1\}} \Bigg\} 
\end{align}
for $i=1,\ldots,E$, where \( \mathds{1}_{\{A\}} \) is the indicator function that equals 1 if condition \( A \) holds true, and 0 otherwise. It is straightforward that $\mathcal{H}_{\lambda_0}^{\lambda_1}(v_i)$ can only assume values in $\{0, v_i,1\}$. Evaluating the objective function $F$ of (\ref{eq:Prox_proof}) at \( u_i = \{0, v_i, 1\} \) with $v_i \in (0,1)$, we obtain:
\[
F(0) = \frac{1}{2} v_i^2 + \lambda_1, \;\;
F(v_i) = \lambda_0 + \lambda_1, \;\;
F(1) = \frac{1}{2} (1 - v_i)^2 + \lambda_0.
\]
Comparing \( F(0) \leq F(v_i) \) leads to \( v_i^2 \leq 2\lambda_0 \), i.e., \( v_i \leq \sqrt{2\lambda_1} \).  
Comparing \( F(v_i) \leq F(1) \) leads to \( (1 - v_i)^2 \geq 2\lambda_1 \), i.e., \( v_i \leq 1 - \sqrt{2\lambda_1} \). Thus, the optimal solution of (\ref{eq:Prox_proof}) is:
\[
\mathcal{H}_{\lambda_0}^{\lambda_1}(v_i) =
\begin{cases} 
0, & v_i \leq \sqrt{2\lambda_0}, \\
v_i, & \sqrt{2\lambda_0} < v_i < 1 - \sqrt{2\lambda_1}, \\
1, & v_i \geq 1 - \sqrt{2\lambda_1}.
\end{cases}
\] 
for $i=1,\ldots,E$, concluding the proof of the lemma.


\section{Proof Of Theorem 2}
To simplify the discussion, let us define:
\begin{equation}
    \mathbf{T}_i=I_{2M+1}-\mu_i \mathbf{C}_{z,i}, \quad
    \mathcal{T}= \text{diag}\{\mathbf{T}_i\}_{i=1,...,E} \ , \notag 
\end{equation}
thus we have that $\mathbf{B}=\mathcal{A}\mathcal{T}$. Since $0 < \mu_i < \frac{2}{\rho(\mathbf{C}_{z,i})}$ for all $i$, it follows that $\rho(\mathbf{T}_i) \leq 1$ for all $i$, and consequently $\rho(\mathcal{T}) \leq 1$. In addition, since $\mathbf{C}_{z,{\overline{k}}}$ is positive definite for a certain $\overline{k}$, we have that $\rho(\mathbf{T}_{\overline{k}})<1$.

Arguing by contradiction, suppose that $\rho(\mathbf{B}) \geq 1$, and let $\lambda$ be an eigenvalue of $\mathbf{B}$ with $|\lambda| \geq 1$, and let $\mathbf{v}$ be a corresponding eigenvector. Since $\mathbf{v},\, \mathbf{B}\mathbf{v} \in \mathbb{R}^{E \cdot (2M+1)}$, we have the following expressions:
\begin{equation}
    \mathbf{v} = \operatorname{col}\{\mathbf{v}_i\}_{i=1}^E, \qquad
    \mathbf{B}\mathbf{v} = \operatorname{col}\{[\mathbf{Q}\mathbf{v}]_i\}_{i=1}^E, \notag
\end{equation}
where $\mathbf{v}_i,\, [\mathbf{Q}\mathbf{v}]_i \in \mathbb{R}^{2M+1}$, and $\operatorname{col}(\cdot)$ denotes the column stacking operator. As a consequence of the block structure of $\mathcal{A}$ and $\mathcal{T}$ and since $\bv$ is an eigenvector, we have:
\begin{equation}
    \lambda \bv_i = [\mathbf{Q}\bv]_i=\sum_{j \in \mathcal{N}_i} a_{i,j}\mathbf{T}_j \bv_j, \quad i=1,...,E \ .
\end{equation}
Taking norms on both sides and using the triangle inequality:
\begin{align}\label{inequality2}
    \|\mathbf{v}_i\| &= \frac{1}{|\lambda|} \left\| \sum_{j \in \mathcal{N}_i} a_{i,j} \mathbf{T}_j \mathbf{v}_j \right\| \leq \frac{1}{|\lambda|} \sum_{j \in \mathcal{N}_i} a_{i,j} \| \mathbf{T}_j \mathbf{v}_j \| \notag\\
    &\leq \frac{1}{|\lambda|} \sum_{j \in \mathcal{N}_i} a_{i,j} \| \mathbf{v}_j \|,
\end{align}
where the last inequality holds because each $\mathbf{T}_j$ is diagonalizable with spectral radius at most $1$.

Now, define index $h$ such that $\|\mathbf{v}_h\| = \max_{i=1,\dots,E} \|\mathbf{v}_i\|$. Applying the  inequality (\ref{inequality2}) at $i = h$ and noting that $\sum_{j \in \mathcal{N}_h} a_{h,j} = 1$ and $a_{h,j} \geq 0$, we obtain:
\begin{equation*}
    \|\mathbf{v}_h\| \leq \frac{1}{|\lambda|} \sum_{j \in \mathcal{N}_h} a_{h,j} \|\mathbf{v}_j\| \leq \frac{1}{|\lambda|} \|\mathbf{v}_h\|.
\end{equation*}
Since $|\lambda| \geq 1$, this chain of inequalities can only hold if all inequalities are equalities. Therefore, for all $j \in \mathcal{N}_h$:
\[
    \|\mathbf{v}_j\| = \|\mathbf{v}_h\| \quad \text{and} \quad \|\mathbf{T}_j \mathbf{v}_j\| = \|\mathbf{v}_j\|.
\]
This shows that all neighbors of $h$ attain the same maximum norm and are invariant under their corresponding $\mathbf{T}_j$ in norm. Since $\mathbf{A}$ is irreducible, a similar argument extends this equality of norms to all components: for every $i$, we have $\|\mathbf{v}_i\| = \|\mathbf{v}_h\|$  and $\|\mathbf{T}_i \mathbf{v}_i\| = \|\mathbf{v}_i\|$. But this leads to a contradiction. Specifically, by assumption, $\rho(\mathbf{T}_{\overline{k}}) < 1$. Therefore, for any nonzero vector $\mathbf{v}_{\overline{k}}$, it must hold that  $\|\mathbf{T}_{\overline{k}} \mathbf{v}_{\overline{k}}\| < \|\mathbf{v}_{\overline{k}}\|$,
contradicting the earlier conclusion that $\|\mathbf{T}_{\overline{k}} \mathbf{v}_{\overline{k}}\| = \|\mathbf{v}_{\overline{k}}\|$. Hence, our initial assumption that $\rho(\mathbf{B}) \geq 1$ must be false. We then conclude that $\rho(\mathbf{B}) < 1$, and the recursion \eqref{eq:mean_recursion_dist} converges to zero by  contraction arguments. This concludes the proof of Theorem 2.

\end{document}